\newtheorem{theorem}{Theorem}
\newtheorem{corollary}{Corollary}
\newtheorem{definition}{Definition}
\DeclareMathOperator*{\diag}{diag}
\DeclareMathOperator*{\var}{var}
\DeclareMathOperator*{\tr}{tr}
\DeclareMathOperator*{\poisson}{Poisson}
\DeclareMathOperator*{\median}{median}
\DeclareMathOperator*{\MAD}{MAD}
\newcommand{\bs}[1]{\boldsymbol{#1}}
\begin{document}
\singlespacing
\title{Unifying and Generalizing Methods for Removing Unwanted Variation Based on Negative Controls}
\author{David Gerard$^1$ and Matthew Stephens$^{1,2}$ \\
  Departments of Human Genetics$^1$ and Statistics$^2$, \\
  University of Chicago, Chicago, IL, 60637, USA}
\date{May 23, 2017}
\maketitle

\begin{abstract}
  Unwanted variation, including hidden confounding, is a well-known
  problem in many fields, particularly large-scale gene expression
  studies. Recent proposals to use control genes --- genes assumed to
  be unassociated with the covariates of interest --- have led to new
  methods to deal with this problem. Going by the moniker
  \textbf{R}emoving \textbf{U}nwanted \textbf{V}ariation (RUV), there
  are many versions --- RUV1, RUV2, RUV4, RUVinv, RUVrinv, RUVfun.  In
  this paper, we introduce a general framework, RUV*, that both unites
  and generalizes these approaches. This unifying framework helps
  clarify connections between existing methods. In particular we
  provide conditions under which RUV2 and RUV4 are equivalent. The
  RUV* framework also preserves an advantage of RUV approaches ---
  their modularity --- which facilitates the development of novel
  methods based on existing matrix imputation algorithms. We
  illustrate this by implementing RUVB, a version of RUV* based on
  Bayesian factor analysis.  In realistic simulations based on real
  data we found that RUVB is competitive with existing methods in
  terms of both power and calibration, although we also highlight the
  challenges of providing consistently reliable calibration among data
  sets.
\end{abstract}

\renewcommand{\thefootnote}{\fnsymbol{footnote}}
\footnotetext{\emph{MSC 2010 subject classifications}:\ primary 62J15, 62F15; secondary 62H25, 62P10\\
\emph{Keywords and phrases}:\ unobserved confounding, hidden confounding, correlated tests, negative controls, RNA-seq, gene expression, unwanted variation, batch effects}
\renewcommand{\thefootnote}{\arabic{footnote}}

\section{Introduction}
\label{section:introduction}
Many experiments and observational studies in genetics are overwhelmed
with unwanted sources of variation. Examples include:\ processing date
\citep{akey2007design}, the lab that collected a sample
\citep{irizarry2005multiple}, the batch in which a sample was
processed \citep{leek2010tackling}, and subject attributes such as
environmental factors \citep{gibson2008environmental} and ancestry
\citep{price2006principal}. These factors, if ignored, can result in
disastrously wrong conclusions \citep{gilad2015reanalysis}. They can
induce dependencies between samples, and inflate test statistics,
making it difficult to control false discovery rates
\citep{efron2004large, efron2008microarrays,efron2010correlated}.

Many of the sources of variation mentioned above are likely to be
observed, in which case standard methods exist to control for them
\citep{johnson2007adjusting}. However, every study likely also
contains unobserved sources of unwanted variation, and these can cause
equally profound problems \citep{leek2007capturing} --- even in the
ideal case of a randomized experiment. To illustrate this we took $20$
samples from an RNA-seq dataset \citep{gtex2015} and randomly assigned
them into two groups of 10 samples. Since group assignment is entirely
independent of the expression levels of each gene, the group labels
are theoretically unassociated with all genes and any observed
``signal'' must be artefactual. Figure \ref{figure:all.null} shows
histograms of the $p$-values from two-sample $t$-tests for three
different randomizations. In each case the distribution of the
$p$-values differs greatly from the theoretical uniform
distribution. Thus, even in this ideal scenario where group labels
were randomly assigned, problems can arise.  One way to understand
this is to note that the same randomization is being applied to all
genes. Consequently, if many genes are affected by an unobserved
factor, and this factor happens by chance to be correlated with the
randomization, then the $p$-value distributions will be non-uniform.
In this sense the problems here can be viewed as being due to
correlation among the $p$ values; see \citet{efron2010correlated} for
extensive discussion.  (The issue of whether the problems in any given
study are caused by correlation, confounding, or something different
is both interesting and subtle; see discussion in
\citet{efron2010correlated, schwartzmann2010comment} for example. For this
reason we adopt the ``unwanted variation'' terminology from
\citet{gagnon2012using}, rather than alternative terminologies such as
``hidden confounding''.)

\begin{figure}
\begin{center}
\includegraphics{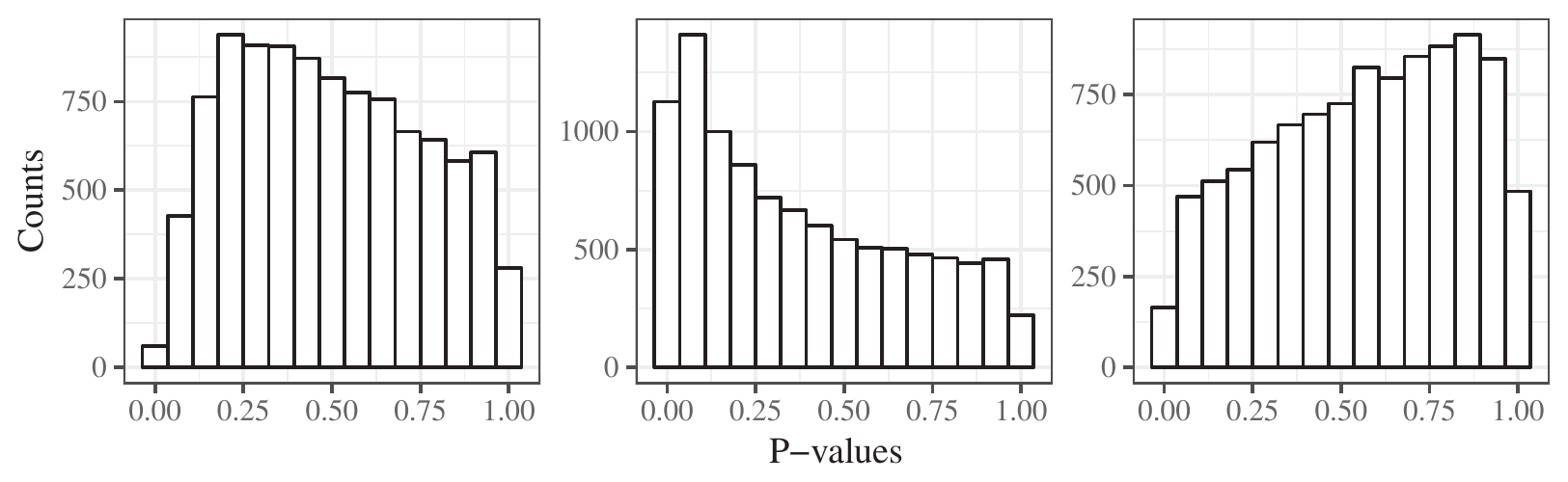}
\caption{Histograms of $p$-values from two-sample $t$-tests when group labels are
  randomly assigned to samples. Each panel is from a different random
  seed. The $p$-value distributions all clearly deviate from uniform.}
\label{figure:all.null}
\end{center}
\end{figure}

In recent years many methods have been introduced to try to solve
problems due to unwanted variation. Perhaps the simplest approach is
to estimate sources of unwanted variation using principal components
analysis \citep{price2006principal}, and then to control for these
factors by using them as covariates in subsequent analyses. Indeed, in
genome-wide association studies this simple method is widely used.
However, in gene expression studies it suffers from the problem that
the principal components will typically also contain the signal of
interest, so controlling for them risks removing that signal.  To
address this \citet{leek2007capturing,leek2008general} introduced
\textbf{S}urrogate \textbf{V}ariable \textbf{A}nalysis (SVA), which
uses an iterative algorithm to attempt to estimate latent factors that
do not include the signal of interest (see also
\citet{lucas2006sparse} for related work).  To account for unwanted
variation, SVA assumes a factor-augmented regression model (see
Section \ref{subsection:rotate}), from which there is an old
literature \citep[and others]{fisher1923studies,
  cochran1943comparison, williams1952interpretation, tukey1962future,
  gollob1968statistical, mandel1969partitioning, mandel1971new,
  efron1972empirical, freeman1973statistical, gabriel1978least}. Since
SVA, a large number of different approaches have emerged along similar
lines, including \citet{behzadi2007component, carvalho2008high,
  kang2008accurate, kang2008efficient, stegle2008accounting,
  friguet2009factor, kang2010variance, listgarten2010correction,
  stegle2010bayesian, wu2010subset, gagnon2012using, fusi2012joint,
  stegle2012using, sun2012multiple, gagnon2013removing,
  mostafavi2013normalizing, perry2013degrees, yang2013accounting,
  wang2015confounder, chen2016normalization}, among others.

As noted above, a key difficulty in adjusting for unwanted variation
in expression studies is distinguishing between the effect of a
treatment and the effect of factors that are correlated with a
treatment. Available methods deal with this problem in different ways.
In this paper we focus on the subset of methods that use ``negative
controls'' to help achieve this goal. In the context of a gene
expression study, a negative control is a gene whose expression is
assumed \emph{a priori} to be unassociated with all covariates (and
treatments) of interest.  Under this assumption, negative controls can
be used to separate sources of unwanted variation from the treatment
effects.  The idea of using negative controls in this way appears in
\citet{lucas2006sparse}, and has been recently popularized by
\citet{gagnon2012using,gagnon2013removing} in a series of methods and
software going by the moniker \textbf{R}emoving \textbf{U}nwanted
\textbf{V}ariation (RUV).  There are several different methods,
including RUV2 (for RUV 2-step), RUV4, RUVinv (a special case of
RUV4), RUVrinv, RUVfun, and RUV1.

Understanding the relative merits and properties of the different RUV
methods, which are all aimed at solving essentially the same problem,
is a non-trivial task. The main contribution of this paper is to
outline a general framework, RUV*, that encompasses all versions of
RUV (Section \ref{section:ruv5.full}). RUV* represents the problem as
a general matrix imputation procedure, both providing a unifying
conceptual framework, and opening up new approaches based on the large
literature in matrix imputation. Our RUV* framework also provides a
simple and modular way to account for uncertainty in the estimated
sources of unwanted variation, which is an issue ignored by most
methods.  On the way to this general framework we make detailed
connections between RUV2 and RUV4, exploiting the formulation in
\citet{wang2015confounder}.

The structure of the paper is as follows.  We first review the
formulation of \citet{wang2015confounder} (Section
\ref{subsection:rotate}) in the context of RUV4 (Section
\ref{subsection:variance}). We then extend this to include RUV2
(Section \ref{section:ruv2}), and develop necessary and sufficient
conditions for a procedure to be a version of both RUV2 and RUV4
(Section \ref{section:ruv3.full}). We call the resulting procedure
RUV3. We then develop the general RUV* framework (Section
\ref{section:ruv5.full}), and illustrate it by implementing an
approach based on Bayesian factor analysis that we call RUVB. In
Section \ref{section:variance.estimation} we briefly discuss the issue
of variance estimation, which turns out to be an important issue that
can greatly affect empirical performance.  In Section
\ref{section:evaluate} we use realistic simulations, based on real
data, to compare many of the available methods that use negative
controls. We show that RUVB has competitive power and calibration
compared with other methods, particularly when there are few control
genes. We finish with a discussion in Section
\ref{section:discussion}.

On notation:\ throughout we denote matrices using bold capital letters
($\bs{A}$), except for $\bs{\alpha}$ and $\bs{\beta}$, which are also
matrices. Bold lowercase letters are vectors ($\bs{a}$), and non-bold
lowercase letters are scalars ($a$). Where there is no chance for
confusion, we use non-bold lowercase to denote scalar elements of
vectors or matrices. For example, $a_{ij}$ is the $(i,j)$th element of
$\bs{A}$ and $a_i$ is the $i$th element of $\bs{a}$. The notation
$\bs{A}_{n \times m}$ denotes that the matrix $\bs{A}$ is an $n$ by
$m$ matrix. The matrix transpose is denoted $\bs{A}^{\intercal}$ and
the matrix inverse is denoted $\bs{A}^{-1}$. Sets are generally
denoted with calligraphic letters ($\mathcal{A}$), and the complement
of a set is denoted with a bar ($\bar{\mathcal{A}}$). Finally,
$p(\bs{a}|\bs{b})$ denotes a probability density function of $\bs{a}$
conditional on $\bs{b}$.

\section{RUV4}
\label{section:methods}

\subsection{Review of the Two-step Rotation Method}
\label{subsection:rotate}

Most existing approaches to this problem \citep{leek2007capturing,
  leek2008general, gagnon2012using, sun2012multiple,
  gagnon2013removing, wang2015confounder} use a low-rank matrix to
capture unwanted variation.  Specifically, they assume:
\begin{align}
  \label{equation:full.model}
  \bs{Y}_{n\times p} = \bs{X}_{n \times
    k}\bs{\beta}_{k \times p} + \bs{Z}_{n \times
    q}\bs{\alpha}_{q \times p} + \bs{E}_{n\times
    p},
\end{align}
where, in the context of a gene-expression study, $y_{ij}$ is the
normalized expression level of the $j$th gene on the $i$th sample,
$\bs{X}$ contains the observed covariates, $\bs{\beta}$ contains the
coefficients of $\bs{X}$, $\bs{Z}$ is a matrix of unobserved factors
(sources of unwanted variation), $\bs{\alpha}$ contains the
coefficients of $\bs{Z}$, and $\bs{E}$ contains independent (Gaussian)
errors with means 0 and column-specific variances
$\var(e_{ij}) = \sigma_j^2$. In this model, the only known quantities
are $\bs{Y}$ and $\bs{X}$.

To fit \eqref{equation:full.model}, it is common to apply a two-step
approach
(e.g.\ \citet{gagnon2013removing,sun2012multiple,wang2015confounder}).
The first step regresses out $\bs{X}$ and then, using the residuals of
this regression, estimates $\bs{\alpha}$ and the $\sigma_j$'s.  The
second step then assumes that $\bs{\alpha}$ and the $\sigma_j$'s are
known and estimates $\bs{\beta}$ and $\bs{Z}$.
\citet{wang2015confounder} helpfully frame this two-step approach as a
rotation followed by estimation in two independent models. We now
review this approach.

First, we let $\bs{X} = \bs{Q}\bs{R}$ denote the QR decomposition of
$\bs{X}$, where $\bs{Q} \in \mathbb{R}^{n \times n}$ is an orthogonal
matrix
($\bs{Q}^{\intercal}\bs{Q} = \bs{Q}\bs{Q}^{\intercal} = \bs{I}_n$) and
$\bs{R}_{n \times k} = {{\bs{R}_1}\choose{0}}$, where
$\bs{R}_1 \in \mathbb{R}^{k \times k}$ is an upper-triangular
matrix. Multiplying \eqref{equation:full.model} on the left by
$\bs{Q}^{\intercal}$ yields
\begin{align}
  \label{equation:rotated}
  \bs{Q}^{\intercal}\bs{Y} =
  \bs{R}\bs{\beta} +
  \bs{Q}^{\intercal}\bs{Z}\bs{\alpha} +
  \bs{Q}^{\intercal}\bs{E}.
\end{align}

Suppose that $k = k_1 + k_2$, where the first $k_1$ covariates of
$\bs{X}$ are not of direct interest, but are included because of
various modeling decisions (e.g.\ an intercept term, or covariates that
need to be controlled for). The last $k_2$ columns of $\bs{X}$ are the
variables of interest whose putative associations with $\bs{Y}$ the
researcher wishes to test. Let
$\bs{Y}_1 \in \mathbb{R}^{k_1 \times p}$ be the first $k_1$ rows of
$\bs{Q}^{\intercal}\bs{Y}$, $\bs{Y}_2 \in \mathbb{R}^{k_2 \times p}$
be the next $k_2$ rows of $\bs{Q}^{\intercal}\bs{Y}$, and
$\bs{Y}_3 \in \mathbb{R}^{(n-k) \times p}$ be the last $n-k$ rows of
$\bs{Q}^{\intercal}\bs{Y}$. Conformably partition
$\bs{Q}^{\intercal}\bs{Z}$ into $\bs{Z}_1$, $\bs{Z}_2$, and
$\bs{Z}_3$, and $\bs{Q}^{\intercal}\bs{E}$ into $\bs{E}_1$,
$\bs{E}_2$, and $\bs{E}_3$. Let
\begin{align}
  \bs{R}_1 =
  \left(
    \begin{array}{cc}
      \bs{R}_{11} & \bs{R}_{12}\\
      0 & \bs{R}_{22}
    \end{array}
  \right).
\end{align}
Finally, partition
$\bs{\beta} = {{\bs{\beta}_1}\choose{\bs{\beta}_2}}$ so that
$\bs{\beta}_1 \in \mathbb{R}^{k_1 \times p}$ contains the coefficients
for the first $k_1$ covariates and
$\bs{\beta}_2 \in \mathbb{R}^{k_2\times p}$ contains the coefficients
for the last $k_2$ covariates. Then \eqref{equation:rotated} may be
written as three models
\begin{align}
  \label{equation:Y1.model}
  \bs{Y}_1 &= \bs{R}_{11}\bs{\beta}_1 + \bs{R}_{12}\bs{\beta}_2 + \bs{Z}_1\bs{\alpha} + \bs{E}_1,\\
  \label{equation:Y2.model}
  \bs{Y}_2 &= \phantom{\bs{R}_{11}\bs{\beta}_1 +\ } \bs{R}_{22}\bs{\beta}_2 + \bs{Z}_2\bs{\alpha} + \bs{E}_2,\\
  \label{equation:Y3.model}
  \bs{Y}_3 &=
  \phantom{\bs{R}_{11}\bs{\beta}_1 +
    \bs{R}_{12}\bs{\beta}_2 +\ }
  \bs{Z}_3\bs{\alpha} + \bs{E}_3.
\end{align}

Importantly, the error terms in \eqref{equation:Y1.model},
\eqref{equation:Y2.model}, and \eqref{equation:Y3.model} are mutually
independent. This follows from the easily-proved fact that $\bs{E}$ is
equal in distribution to $\bs{Q}^{\intercal}\bs{E}$.  The two-step
estimation procedure mentioned above becomes:\ first, estimate
$\bs{\alpha}$ and the $\sigma_j$'s using \eqref{equation:Y3.model};
second, estimate $\bs{\beta}_2$ and $\bs{Z}_2$ given $\bs{\alpha}$ and
the $\sigma_j$'s using \eqref{equation:Y2.model}. Equation
\eqref{equation:Y1.model} contains the nuisance parameters
$\bs{\beta}_1$ and is ignored.

\subsection{Review of RUV4}
\label{subsection:variance}

One approach to distinguishing between unwanted variation and effects
of interest is to use ``control genes'' \citep{lucas2006sparse,
  gagnon2012using}.  A control gene is a gene that is assumed \emph{a
  priori} to be unassociated with the covariate(s) of interest. More
formally, the set of control genes,
$\mathcal{C} \subseteq \{1, \ldots, p\}$, has the property that
\begin{align}
  \beta_{ij} = 0 \text{ for all } i = k_1 + 1, \ldots, k, \text{ and } j \in \mathcal{C},
\end{align}
and is a subset of the truly null genes.  Examples of control genes
used in practice are spike-in controls \citep{jiang2011synthetic} used
to adjust for technical factors (such as sample batch) and
housekeeping genes \citep{eisenberg2013human} used to adjust for both
technical and biological factors (such as subject ancestry).

RUV4 \citep{gagnon2013removing} uses control genes to estimate
$\bs{\beta}_2$ in the presence of unwanted variation. Let
$\bs{Y}_{2\mathcal{C}} \in \mathbb{R}^{k_2 \times m}$ denote the
submatrix of $\bs{Y}_{2}$ with columns that correspond to the $m$
control genes. Similarly subset the relevant columns to obtain
$\bs{\beta}_{2\mathcal{C}} \in \mathbb{R}^{k_2\times m}$,
$\bs{\alpha}_{\mathcal{C}}\in\mathbb{R}^{q\times m}$, and
$\bs{E}_{2\mathcal{C}} \in \mathbb{R}^{k_2 \times m}$.  The steps for
RUV4, including a variation from \citet{wang2015confounder}, are
presented in Procedure \ref{algorithm:ruv4}. (For simplicity we focus
on point estimates of effects here, deferring assessment of standard
errors to Section \ref{section:variance.estimation}.)

The key idea in Procedure \ref{algorithm:ruv4} is that for the
control genes model \eqref{equation:Y2.model} becomes
\begin{align}
  \nonumber\bs{Y}_{2\mathcal{C}} &= \bs{R}_{22}\bs{\beta}_{2\mathcal{C}} + \bs{Z}_2\hat{\bs{\alpha}}_{\mathcal{C}} + \bs{E}_{2\mathcal{C}},\\
  &= \phantom{\bs{R}_{22}\bs{\beta}_{2\mathcal{C}}
    +\ } \bs{Z}_2\hat{\bs{\alpha}}_{\mathcal{C}} +
  \bs{E}_{2\mathcal{C}}, \label{equation:control.model}\\
  e_{2\mathcal{C}ij} &\overset{ind}{\sim} N(0,
  \hat{\sigma}_j^2) \label{equation:control.model.var}.
\end{align}
The equality in \eqref{equation:control.model} follows from the
property of control genes that
$\bs{\beta}_{2\mathcal{C}} = \bs{0}$. Step 2 of
Procedure \ref{algorithm:ruv4} uses \eqref{equation:control.model} to
estimate $\bs{Z}_2$.

Step 1 of Procedure \ref{algorithm:ruv4} requires a factor analysis of $\bs{Y}_3$. We formally define a
factor analysis as follows.
\begin{definition}
  \label{def:fa}
  A \emph{factor analysis}, $\mathcal{F}$, of rank $q \leq \min(n, p)$
  on $\bs{Y} \in \mathbb{R}^{n \times p}$ is a set of three
  functions $\mathcal{F} = \{\hat{\bs{\Sigma}}(\bs{Y}),
  \hat{\bs{Z}}(\bs{Y}),
  \hat{\bs{\alpha}}(\bs{Y})\}$ such that
  $\hat{\bs{\Sigma}}(\bs{Y}) \in \mathbb{R}^{p \times
    p}$ is diagonal with positive diagonal entries,
  $\hat{\bs{Z}}(\bs{Y}) \in \mathbb{R}^{n \times q}$ has
  rank $q$, and $\hat{\bs{\alpha}}(\bs{Y}) \in
  \mathbb{R}^{q \times p}$ has rank $q$.
\end{definition}
RUV4 allows the analyst to use any factor analysis they desire. Thus,
RUV4 is not a single method, but a collection of methods indexed by
the factor analysis used. When we want to be explicit about this
indexing, we will write RUV4$(\mathcal{F})$.

\begin{algorithm}
  \floatname{algorithm}{Procedure}
  \caption{RUV4}
  \label{algorithm:ruv4}
  \begin{algorithmic}[1]
    \STATE Estimate $\bs{\alpha}$ and $\bs{\Sigma}$ using a
    factor analysis (Definition \ref{def:fa}) on $\bs{Y}_3$ in
    \eqref{equation:Y3.model}. Call these estimates
    $\hat{\bs{\alpha}}$ and $\hat{\bs{\Sigma}}$.
    \STATE Estimate $\bs{Z}_2$ using control genes (equation \eqref{equation:control.model}). Let
    $\hat{\bs{\Sigma}}_{\mathcal{C}} =
    \diag(\hat{\sigma}_{j_1}^2,\ldots, \hat{\sigma}_{j_m}^2)$ for $j_i
    \in \mathcal{C}$ for all $i = 1,\ldots, m$.

    RUV4 in \citet{gagnon2013removing} estimates $\bs{Z}_2$ by ordinary least squares (OLS)
    \begin{align}
      \label{equation:ruv4.gag.z2.est}
      \hat{\bs{Z}}_2 = \bs{Y}_{2\mathcal{C}}
      \hat{\bs{\alpha}}_{\mathcal{C}}^{\intercal}(\hat{\bs{\alpha}}_{\mathcal{C}}\hat{\bs{\alpha}}_{\mathcal{C}}^{\intercal})^{-1}.
    \end{align}

Alternatively, \citet{wang2015confounder} implement a variation on RUV4 (which we call CATE, and is implemented in the R package {\tt cate}) that
estimates $\bs{Z}_2$ by generalized least squares (GLS)
    \begin{align}
      \label{equation:cate.z2.est}
      \hat{\bs{Z}}_2 = \bs{Y}_{2\mathcal{C}}
      \hat{\bs{\Sigma}}_{\mathcal{C}}^{-1}
      \hat{\bs{\alpha}}_{\mathcal{C}}^{\intercal}(\hat{\bs{\alpha}}_{\mathcal{C}}\hat{\bs{\Sigma}}_{\mathcal{C}}^{-1}\hat{\bs{\alpha}}_{\mathcal{C}}^{\intercal})^{-1}.
    \end{align}

    \STATE Estimate $\bs{\beta}_2$ using
    \eqref{equation:Y2.model} by
    \begin{align}
      \hat{\bs{\beta}}_2 = \bs{R}_{22}^{-1}(\bs{Y}_2 -
      \hat{\bs{Z}}_2\hat{\bs{\alpha}}).
    \end{align}
  \end{algorithmic}
\end{algorithm}

\section{RUV2}
\label{section:ruv2}

RUV2 is a method for removing unwanted variation presented in
\citet{gagnon2012using} (predating RUV4 in
\citet{gagnon2013removing}). One contribution of our paper is to
present RUV2 within the same rotation framework as RUV4 (Section
\ref{subsection:rotate}).

Procedure \ref{algorithm:ruv2.gag} summarizes RUV2 as presented in
\citet{gagnon2012using}.  The method is simple:\ first estimate the
factors causing unwanted variation from the control genes, and then
include these factors as covariates in the regression models for the
non-control genes. However, this procedure does not deal with nuisance
parameters. To deal with these, \citet{gagnon2013removing} introduce
an extension, also called RUV2. This extension first rotates $\bs{Y}$
and $\bs{X}$ onto the orthogonal complement of the columns of $\bs{X}$
corresponding to the nuisance parameters \citep[equation (64)
in][]{gagnon2013removing} and then applies Procedure
\ref{algorithm:ruv2.gag}.

Like RUV4, RUV2 is actually a class of methods indexed by the factor
analysis used (Definition \ref{def:fa}). We denote the class of RUV2
methods presented in \citet{gagnon2012using} by
$\text{RUV2}_{old}(\mathcal{F})$.

In Procedure \ref{algorithm:ruv2.new} we present a class of methods,
denoted $\text{RUV2}_{new}(\mathcal{F})$, that we will prove to be
equivalent to $\text{RUV2}_{old}$.

\begin{algorithm}
  \floatname{algorithm}{Procedure}
  \caption{RUV2 (without nuisance covariates) as presented in \citet{gagnon2012using}}
  \label{algorithm:ruv2.gag}
  \begin{algorithmic}[1]
    \STATE From \eqref{equation:full.model}, estimate $\bs{Z}$ by factor analysis on $\bs{Y}_{\mathcal{C}}$. Call this estimate $\hat{\bs{Z}}$.
    \STATE Estimate $\bs{\beta}$ by regressing $\bs{Y}$ on $(\bs{X}, \hat{\bs{Z}})$. That is
    \begin{align}
      \hat{\bs{\beta}} = (\bs{X}^{\intercal}\bs{S}\bs{X})^{-1}\bs{X}^{\intercal}\bs{S}\bs{Y},
    \end{align}
    where $\bs{S} := \bs{I}_n - \hat{\bs{Z}}(\hat{\bs{Z}}^{\intercal}\hat{\bs{Z}})^{-1}\hat{\bs{Z}}^{\intercal}$.
  \end{algorithmic}
\end{algorithm}

\begin{algorithm}
  \floatname{algorithm}{Procedure}
  \caption{RUV2 in rotated model framework of Section \ref{subsection:rotate}}
  \label{algorithm:ruv2.new}
  \begin{algorithmic}[1]
    \STATE Estimate $\bs{Z}_2$ and $\bs{Z}_3$ by factor
    analysis on ${\bs{Y}_{2\mathcal{C}} \choose
      \bs{Y}_{3\mathcal{C}}}$. Call these estimates
    $\hat{\bs{Z}}_2$ and $\hat{\bs{Z}}_3$.
    \STATE Estimate $\bs{\alpha}$ and $\bs{\Sigma}$ by
    regressing $\bs{Y}_3$ on $\hat{\bs{Z}}_3$. That is
    \begin{align}
      \label{equation:alphahat.ruv2.rotate}
      \hat{\bs{\alpha}} &= (\hat{\bs{Z}}_3^{\intercal}\hat{\bs{Z}}_3^{-1})\hat{\bs{Z}}_3^{\intercal}\bs{Y}_3 \text{ and}\\
      \hat{\bs{\Sigma}} &= \diag[(\bs{Y}_3 - \hat{\bs{Z}}_3\hat{\bs{\alpha}})^{\intercal}(\bs{Y}_3 - \hat{\bs{Z}}_3\hat{\bs{\alpha}})] / (n - k - q).
    \end{align}
    \STATE Estimate $\bs{\beta}_2$ with
    \begin{align}
      \label{equation:beta2.ruv2.new}
      \hat{\bs{\beta}}_2 = \bs{R}_{22}^{-1}(\bs{Y}_{2} - \hat{\bs{Z}}_2\hat{\bs{\alpha}}).
    \end{align}
  \end{algorithmic}
\end{algorithm}

\begin{theorem}
\label{theorem:ruv2.equal}
  For a given orthogonal matrix $\bs{Q} \in \mathbb{R}^{n
    \times n}$ and an arbitrary non-singular matrix $\bs{A}(\bs{Y})$
  that (possibly) depends on $\bs{Y}$, suppose
  \begin{align}
    \label{equation:fa.gag}\mathcal{F}_1(\bs{Y}) &:= \{\hat{\bs{\Sigma}}(\bs{Y}), \hat{\bs{Z}}(\bs{Y}), \hat{\bs{\alpha}}(\bs{Y})\}, \text{ and}\\
    \label{equation:fa.new}\mathcal{F}_2(\bs{Y}) &:= \{\hat{\bs{\Sigma}}(\bs{Q}^{\intercal}\bs{Y}), \bs{Q}\hat{\bs{Z}}(\bs{Q}^{\intercal}\bs{Y})\bs{A}(\bs{Y}), \bs{A}^{-1}(\bs{Y})\hat{\bs{\alpha}}(\bs{Q}^{\intercal}\bs{Y})\}.
  \end{align}
  Then
\begin{align}
\text{RUV2}_{old}(\mathcal{F}_2) = \text{RUV2}_{new}(\mathcal{F}_1).
\end{align}
That is, Procedure \ref{algorithm:ruv2.gag} using factor analysis
\eqref{equation:fa.gag} is equivalent to Procedure
\ref{algorithm:ruv2.new} using factor analysis
\eqref{equation:fa.new}.
\end{theorem}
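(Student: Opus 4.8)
The plan is to show that the two procedures return the same estimate $\hat{\bs{\beta}}_2$ by reducing each to a single ordinary least squares fit. I would work in the setting of Procedure~\ref{algorithm:ruv2.gag} as stated, i.e.\ with no nuisance covariates ($k_1 = 0$, so $\bs{R}_1 = \bs{R}_{22}$, $\hat{\bs{\beta}} = \hat{\bs{\beta}}_2$, and $\bs{Q}^{\intercal}\bs{Y}$ has only the two row blocks $\bs{Y}_2$ and $\bs{Y}_3$); the case $k_1 > 0$ then follows because RUV2 first rotates onto the orthogonal complement of the nuisance columns, which removes the top block and returns us to this situation. The first observation is that the formula $\hat{\bs{\beta}} = (\bs{X}^{\intercal}\bs{S}\bs{X})^{-1}\bs{X}^{\intercal}\bs{S}\bs{Y}$ in Step~2 of Procedure~\ref{algorithm:ruv2.gag} is, by the Frisch--Waugh--Lovell theorem, exactly the coefficient on $\bs{X}$ obtained by regressing $\bs{Y}$ on the augmented design $[\bs{X},\ \hat{\bs{Z}}^{old}]$, where $\hat{\bs{Z}}^{old} = \bs{Q}\hat{\bs{Z}}(\bs{Q}^{\intercal}\bs{Y}_{\mathcal{C}})\bs{A}$ is the factor-analysis output produced by $\mathcal{F}_2$ on the control columns $\bs{Y}_{\mathcal{C}}$.

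Next I would exploit two invariances of this least squares fit. Because $\bs{Q}$ is orthogonal, left-multiplying the response and every regressor by $\bs{Q}^{\intercal}$ leaves all coefficients unchanged; this sends $\bs{X}\mapsto\bs{R}$, $\bs{Y}\mapsto\bs{Q}^{\intercal}\bs{Y}$, and $\hat{\bs{Z}}^{old}\mapsto \bs{Q}^{\intercal}\hat{\bs{Z}}^{old} = \hat{\bs{Z}}(\bs{Q}^{\intercal}\bs{Y}_{\mathcal{C}})\bs{A}$. The key point is that $\hat{\bs{Z}}(\bs{Q}^{\intercal}\bs{Y}_{\mathcal{C}})$ is precisely the factor-analysis output of $\mathcal{F}_1$ applied to $\bs{Q}^{\intercal}\bs{Y}_{\mathcal{C}} = {\bs{Y}_{2\mathcal{C}}\choose \bs{Y}_{3\mathcal{C}}}$, i.e.\ the quantity ${\hat{\bs{Z}}_2 \choose \hat{\bs{Z}}_3}$ computed in Step~1 of Procedure~\ref{algorithm:ruv2.new}; call it $\tilde{\bs{Z}}$. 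The second invariance is that right-multiplying a regressor block $\tilde{\bs{Z}}$ by the nonsingular $\bs{A}$ leaves its column space, hence the entire fit and in particular the coefficient on $\bs{R}$, unchanged (it merely reparametrizes the coefficient $\bs{\alpha}$). Thus $\hat{\bs{\beta}}^{old}$ equals the coefficient on $\bs{R}$ in the regression of $\bs{Q}^{\intercal}\bs{Y}$ on $[\bs{R},\ \tilde{\bs{Z}}]$---an expression from which $\bs{A}$ has disappeared entirely.

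It then remains to evaluate this single least squares fit and recognize it as the output of Procedure~\ref{algorithm:ruv2.new}. Written blockwise, its residual sum of squares is $\| \bs{Y}_2 - \bs{R}_{22}\bs{\beta}_2 - \hat{\bs{Z}}_2\bs{\alpha}\|^2 + \| \bs{Y}_3 - \hat{\bs{Z}}_3\bs{\alpha}\|^2$, because the design for $\bs{\beta}_2$, namely ${\bs{R}_{22}\choose\bs{0}}$, is supported only on the top block. Since $\bs{R}_{22}$ is invertible, for every $\bs{\alpha}$ the first term can be driven to zero; the minimization therefore decouples, first choosing $\hat{\bs{\alpha}} = (\hat{\bs{Z}}_3^{\intercal}\hat{\bs{Z}}_3)^{-1}\hat{\bs{Z}}_3^{\intercal}\bs{Y}_3$ (the fit of $\bs{Y}_3$ on $\hat{\bs{Z}}_3$, equation~\eqref{equation:alphahat.ruv2.rotate}) to minimize the second term and then setting $\hat{\bs{\beta}}_2 = \bs{R}_{22}^{-1}(\bs{Y}_2 - \hat{\bs{Z}}_2\hat{\bs{\alpha}})$. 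These are exactly \eqref{equation:alphahat.ruv2.rotate} and \eqref{equation:beta2.ruv2.new}, so the two procedures agree.

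I expect the main obstacle to be the careful matching of the two factor analyses through the $(\bs{Q}, \bs{A})$ transformation---in particular, arguing cleanly that the nonsingular $\bs{A}$ has no effect on $\hat{\bs{\beta}}_2$ (it cancels because $\bs{S}$, equivalently the column space of the added regressors, is invariant under right-multiplication by a nonsingular matrix) and keeping the row-block and nuisance-covariate bookkeeping straight. A more computational route that avoids the invariance arguments is to substitute $\hat{\bs{Z}}^{old}=\bs{Q}\tilde{\bs{Z}}\bs{A}$ directly into $\bs{S}$, use $\bs{Q}^{\intercal}\bs{Q} = \bs{I}_n$ to write $\bs{S} = \bs{Q}(\bs{I}_n - \tilde{\bs{Z}}(\tilde{\bs{Z}}^{\intercal}\tilde{\bs{Z}})^{-1}\tilde{\bs{Z}}^{\intercal})\bs{Q}^{\intercal}$, partition the inner projection into blocks $\bs{P}_{ij}$, and verify the identity $\bs{P}_{11}^{-1}\bs{P}_{12} = -\hat{\bs{Z}}_2(\hat{\bs{Z}}_3^{\intercal}\hat{\bs{Z}}_3)^{-1}\hat{\bs{Z}}_3^{\intercal}$ via the Woodbury formula; this is the routine calculation that the invariance argument lets us skip, and it requires $\hat{\bs{Z}}_3$ to have full column rank so the relevant inverses exist.
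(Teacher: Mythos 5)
Your proof is correct and takes essentially the same approach as the paper's: reduce to the no-nuisance-covariate case, identify the output of Procedure~\ref{algorithm:ruv2.gag} as partial regression coefficients in the joint regression on $[\bs{X}, \hat{\bs{Z}}]$, observe that the $(\bs{Q}, \bs{A})$ transformation linking $\mathcal{F}_2$ to $\mathcal{F}_1$ preserves the relevant column space, and recognize the rotated joint regression's coefficients as exactly the formulas of Procedure~\ref{algorithm:ruv2.new}. Your explicit decoupling argument (driving the top block of the residual to zero via invertibility of $\bs{R}_{22}$, so that $\hat{\bs{\alpha}}$ and $\hat{\bs{\beta}}_2$ separate) supplies the computation the paper disposes of with ``just calculate the MLE's,'' and your brief treatment of nuisance covariates matches the substance of the paper's rotation argument.
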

\begin{proof}
  For simplicity, we first assume that there are no nuisance
  covariates. Then
  ${\bs{Y}_{2\mathcal{C}} \choose \bs{Y}_{3\mathcal{C}}} =
  \bs{Q}^{\intercal}\bs{Y}$,
  where $\bs{Q}$ is the orthogonal matrix from the QR decomposition of
  $\bs{X}$ (Section \ref{subsection:rotate}). Thus,
  ${\hat{\bs{Z}}_2 \choose \hat{\bs{Z}}_3}$ from Procedure
  \ref{algorithm:ruv2.new} results from applying $\mathcal{F}_1$ on
  $\bs{Q}^{\intercal}\bs{Y}$ while $\hat{\bs{Z}}$ from Procedure
  \ref{algorithm:ruv2.gag} results from applying $\mathcal{F}_2$ on
  $\bs{Y}$. From the definitions of \eqref{equation:fa.gag} and
  \eqref{equation:fa.new}, we thus have that $\hat{\bs{Z}}$ from step
  1 of Procedure \ref{algorithm:ruv2.gag} is in the same column space
  as $\bs{Q}{\hat{\bs{Z}}_2 \choose \hat{\bs{Z}}_3}$ from step 1 of
  Procedure \ref{algorithm:ruv2.new}. $\hat{\bs{\beta}}_2$ from
  \eqref{equation:beta2.ruv2.new} contains the partial regression
  coefficients of $\bs{X}$ when including
  $\bs{Q}{\hat{\bs{Z}}_2 \choose \hat{\bs{Z}}_3}$ as nuisance
  covariates (to show this, just calculate the MLE's of $\bs{\beta}_2$
  and $\bs{\alpha}$ using \eqref{equation:Y2.model} and
  \eqref{equation:Y3.model}). The estimates of $\bs{\beta}_2$ in step
  2 of Procedure \ref{algorithm:ruv2.gag} are also partial regression
  coefficients of $\bs{X}$ when including $\hat{\bs{Z}}$ as nuisance
  covariates. Since the partial regression coefficients in Procedure
  \ref{algorithm:ruv2.gag} are only a function of $\hat{\bs{Z}}$
  through its column space, and the partial regression coefficients in
  Procedure \ref{algorithm:ruv2.new} are only a function of
  $\bs{Q}{\hat{\bs{Z}}_2 \choose \hat{\bs{Z}}_3}$ through its column
  space, and these column spaces are the same, we have completed the
  proof for the case of no nuisance parameters.

  To deal with nuisance parameters, \citet{gagnon2013removing} rotate
  $\bs{X}$ and $\bs{Y}$ onto the orthogonal complement of the columns
  of $\bs{X}$ corresponding to the nuisance parameters prior to
  applying Procedure \ref{algorithm:ruv2.gag}. If we partition
  $\bs{Q} = (\bs{Q}_1, \bs{Q}_2, \bs{Q}_3)$ and
  $\bs{X} = (\bs{X}_1, \bs{X}_2)$, this is equivalent to using the
  model
\begin{align}
\label{equation:reduced.model}
\bs{W}
\left(
\begin{array}{c}
\bs{Q}_2^{\intercal}\\
\bs{Q}_3^{\intercal}
\end{array}
\right)
\bs{Y}
=
\bs{W}
\left(
\begin{array}{c}
\bs{Q}_2^{\intercal}\\
\bs{Q}_3^{\intercal}
\end{array}
\right)
\bs{X}_2\bs{\beta}_2 +
\bs{W}
\left(
\begin{array}{c}
\bs{Q}_2^{\intercal}\\
\bs{Q}_3^{\intercal}
\end{array}
\right)
\bs{Z}\bs{\alpha} +
\bs{W}
\left(
\begin{array}{c}
\bs{Q}_2^{\intercal}\\
\bs{Q}_3^{\intercal}
\end{array}
\right)
\bs{E}
\end{align}
where $\bs{W}$ is some arbitrary (but known) $n - k_1$ by $n - k_1$
orthogonal matrix. Or, using the QR decomposition of $\bs{X}$,
\eqref{equation:reduced.model}  is
equal to
\begin{align}
\bs{W}
\left(
\begin{array}{c}
\bs{Q}_2^{\intercal}\\
\bs{Q}_3^{\intercal}
\end{array}
\right)
\bs{Y}
=
\bs{W}
\left(
\begin{array}{c}
\bs{R}_{22}\\
\bs{0}
\end{array}
\right)
\bs{\beta}_2
+
\bs{W}
\left(
\begin{array}{c}
\bs{Q}_2^{\intercal}\\
\bs{Q}_3^{\intercal}
\end{array}
\right)
\bs{Z}\bs{\alpha} +
\bs{W}
\left(
\begin{array}{c}
\bs{Q}_2^{\intercal}\\
\bs{Q}_3^{\intercal}
\end{array}
\right)
\bs{E}.
\end{align}
Let
\begin{align}
\tilde{\bs{Y}} :=
\bs{W}
\left(
\begin{array}{c}
\bs{Q}_2^{\intercal}\\
\bs{Q}_3^{\intercal}
\end{array}
\right)
\bs{Y}, \
\tilde{\bs{X}} :=
\bs{W}
\left(
\begin{array}{c}
\bs{R}_{22}\\
\bs{0}
\end{array}
\right), \
\tilde{\bs{Z}} :=
\bs{W}
\left(
\begin{array}{c}
\bs{Q}_2^{\intercal}\\
\bs{Q}_3^{\intercal}
\end{array}
\right)
\bs{Z}, \text{ and }
\tilde{\bs{E}} :=
\bs{W}
\left(
\begin{array}{c}
\bs{Q}_2^{\intercal}\\
\bs{Q}_3^{\intercal}
\end{array}
\right)
\bs{E}.
\end{align}
Then, \eqref{equation:reduced.model} is equal to
\begin{align}
\label{equation:tilde.model}
\tilde{\bs{Y}} = \tilde{\bs{X}}\bs{\beta} + \tilde{\bs{Z}}\bs{\alpha} + \tilde{\bs{E}}.
\end{align}
We now just apply the arguments of the previous paragraph to
\eqref{equation:tilde.model}, where there are no nuisance parameters
and where the QR decomposition of $\tilde{\bs{X}}$ is just
$\tilde{\bs{X}} = \bs{W} {\bs{R}_{22}\choose
  \bs{0}}$. This completes the proof.
\end{proof}

The equivalence of $\text{RUV2}_\text{old}$ and
$\text{RUV2}_\text{new}$ in Theorem \ref{theorem:ruv2.equal} involves
using different factor analyses in each procedure. One can ask under
what conditions the two procedures would be equivalent if given the
{\it same} factor analysis.  We now prove that it suffices for the
factor analysis to be \emph{left orthogonally equivariant}.

\begin{definition}
\label{def:equivariant}
A factor analysis of rank $q$ on $\bs{Y} \in \mathbb{R}^{n \times p}$ is \emph{left orthogonally equivariant} if
\begin{align}
\{\hat{\bs{\Sigma}}(\bs{Q}^{\intercal}\bs{Y}), \hat{\bs{Z}}(\bs{Q}^{\intercal}\bs{Y})\bs{A}(\bs{Y}), \bs{A}(\bs{Y})^{-1}\hat{\bs{\alpha}}(\bs{Q}^{\intercal}\bs{Y})\} =
\{\hat{\bs{\Sigma}}(\bs{Y}), \bs{Q}^{\intercal}\hat{\bs{Z}}(\bs{Y}), \hat{\bs{\alpha}}(\bs{Y})\},
\end{align}
for all fixed orthogonal $\bs{Q} \in \mathbb{R} ^ {n \times
  n}$ and an arbitrary non-singular
$\bs{A}(\bs{Y}) \in \mathbb{R}^{q \times q}$ that
(possibly) depends on $\bs{Y}$.
\end{definition}

\begin{corollary}
Suppose $\mathcal{F}$ is a left orthogonally equivariant factor analysis. Then
\begin{align}
\text{RUV2}_{old}(\mathcal{F}) = \text{RUV2}_{new}(\mathcal{F}).
\end{align}
\end{corollary}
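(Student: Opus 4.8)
The plan is to deduce the corollary directly from Theorem \ref{theorem:ruv2.equal} rather than re-running the argument of its proof. The theorem relates $\text{RUV2}_{old}$ applied to one factor analysis to $\text{RUV2}_{new}$ applied to a companion factor analysis, the two being linked by \eqref{equation:fa.gag} and \eqref{equation:fa.new}. So the first step is to instantiate the theorem with $\mathcal{F}_1 = \mathcal{F}$. Letting $\mathcal{F}_2$ denote the factor analysis built from $\mathcal{F}$ via \eqref{equation:fa.new}, the theorem then yields $\text{RUV2}_{old}(\mathcal{F}_2) = \text{RUV2}_{new}(\mathcal{F})$ for free. Consequently the entire corollary collapses to the single claim that $\mathcal{F}_2 = \mathcal{F}$ as factor analyses: once that holds, $\text{RUV2}_{old}(\mathcal{F}) = \text{RUV2}_{old}(\mathcal{F}_2) = \text{RUV2}_{new}(\mathcal{F})$.

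The second step is to verify $\mathcal{F}_2 = \mathcal{F}$ using left orthogonal equivariance (Definition \ref{def:equivariant}). I would take $\bs{Q}$ in \eqref{equation:fa.new} to be the orthogonal factor from the QR decomposition of $\bs{X}$ (the $\bs{Q}$ that is relevant to RUV2), and I would choose the non-singular $\bs{A}(\bs{Y})$ appearing in \eqref{equation:fa.new} to be precisely the matrix that equivariance supplies for this $\bs{Q}$. Since the $\bs{A}(\bs{Y})$ in Theorem \ref{theorem:ruv2.equal} is arbitrary, this is a legitimate instantiation. I then compare the three components of $\mathcal{F}_2(\bs{Y})$ with those of $\mathcal{F}(\bs{Y})$ term by term. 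The $\hat{\bs{\Sigma}}$ component matches because equivariance gives $\hat{\bs{\Sigma}}(\bs{Q}^{\intercal}\bs{Y}) = \hat{\bs{\Sigma}}(\bs{Y})$, and the $\hat{\bs{\alpha}}$ component matches because equivariance gives $\bs{A}(\bs{Y})^{-1}\hat{\bs{\alpha}}(\bs{Q}^{\intercal}\bs{Y}) = \hat{\bs{\alpha}}(\bs{Y})$. For the $\hat{\bs{Z}}$ component I take the equivariance relation $\hat{\bs{Z}}(\bs{Q}^{\intercal}\bs{Y})\bs{A}(\bs{Y}) = \bs{Q}^{\intercal}\hat{\bs{Z}}(\bs{Y})$, multiply on the left by $\bs{Q}$, and use $\bs{Q}\bs{Q}^{\intercal} = \bs{I}_n$ to obtain $\bs{Q}\hat{\bs{Z}}(\bs{Q}^{\intercal}\bs{Y})\bs{A}(\bs{Y}) = \hat{\bs{Z}}(\bs{Y})$, which is exactly the $\hat{\bs{Z}}$ component of $\mathcal{F}_2$. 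All three components agree, so $\mathcal{F}_2 = \mathcal{F}$ and the proof is complete.

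The main obstacle is purely one of bookkeeping around the matrix $\bs{A}(\bs{Y})$ and the choice of $\bs{Q}$. Definition \ref{def:equivariant} only asserts the \emph{existence} of a non-singular $\bs{A}(\bs{Y})$ (which may depend on both $\bs{Y}$ and $\bs{Q}$) realizing the equivariance identity, whereas Theorem \ref{theorem:ruv2.equal} quantifies over an arbitrary $\bs{A}(\bs{Y})$. The care required is to make sure these two roles are filled by the \emph{same} $\bs{A}(\bs{Y})$, and that the $\bs{Q}$ fed into \eqref{equation:fa.new} is fixed to be the QR factor of $\bs{X}$ so that the two appearances of $\bs{Q}$ coincide. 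Once these identifications are pinned down, the verification is a direct substitution with no further computation.
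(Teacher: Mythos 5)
Your proposal is correct and follows essentially the same route as the paper: instantiate Theorem \ref{theorem:ruv2.equal} with $\mathcal{F}_1 = \mathcal{F}$, use the equivariance identities componentwise (including the same $\bs{Q}\bs{Q}^{\intercal} = \bs{I}_n$ cancellation for the $\hat{\bs{Z}}$ term) to conclude $\mathcal{F}_2 = \mathcal{F}$, and then chain $\text{RUV2}_{old}(\mathcal{F}) = \text{RUV2}_{old}(\mathcal{F}_2) = \text{RUV2}_{new}(\mathcal{F}_1) = \text{RUV2}_{new}(\mathcal{F})$. Your explicit care about matching the $\bs{A}(\bs{Y})$ and $\bs{Q}$ appearing in Definition \ref{def:equivariant} with those in the theorem is a point the paper's proof glosses over, but it is the same argument.
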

\begin{proof}
Suppose $\mathcal{F}_2$ from \eqref{equation:fa.new} is left orthogonally equivariant, then
\begin{align}
\mathcal{F}_2(\bs{Y}) &= \{\hat{\bs{\Sigma}}(\bs{Q}^{\intercal}\bs{Y}), \bs{Q}\hat{\bs{Z}}(\bs{Q}^{\intercal}\bs{Y})\bs{A}(\bs{Y}), \bs{A}^{-1}(\bs{Y})\hat{\bs{\alpha}}(\bs{Q}^{\intercal}\bs{Y})\}\\
&=  \{\hat{\bs{\Sigma}}(\bs{Y}), \bs{Q}\bs{Q}^{\intercal}\hat{\bs{Z}}(\bs{Y}), \hat{\bs{\alpha}}(\bs{Y})\}\\
&=  \{\hat{\bs{\Sigma}}(\bs{Y}), \hat{\bs{Z}}(\bs{Y}), \hat{\bs{\alpha}}(\bs{Y})\} = \mathcal{F}_1(\bs{Y}).
\end{align}
Let $\mathcal{F} := \mathcal{F}_1 = \mathcal{F}_2$. From the results
of Theorem \ref{theorem:ruv2.equal}, we have that
\begin{align}
\text{RUV2}_{old}(\mathcal{F}) = \text{RUV2}_{old}(\mathcal{F}_2) = \text{RUV2}_{new}(\mathcal{F}_1) = \text{RUV2}_{new}(\mathcal{F}).
\end{align}
\end{proof}

A well-known example of a factor analysis that is left orthogonally
equivariant is a truncated singular value decomposition (SVD). That
is, let $\bs{Y} = \bs{U}\bs{D}\bs{V}^{\intercal}$ be the SVD of
$\bs{Y}$. Let
$\bs{D}^{(q)} = \diag(d_{11},\ldots,d_{qq}, 0, \ldots, 0) \in
\mathbb{R}^{n \times n}$
be a diagonal matrix whose first $q$ diagonal elements are the same as
in $\bs{D}$ and the last $n - q$ diagonal elements are $0$. Then
\begin{align}
\label{equation:tsvd.z}\hat{\bs{Z}}(\bs{Y}) &= \bs{U}[\bs{D}^{(q)}]^{1 - \pi}\\
\label{equation:tsvd.alpha}\hat{\bs{\alpha}}(\bs{Y}) &= [\bs{D}^{(q)}]^{\pi}\bs{V}^{\intercal}\\
\label{equation:tsvd.sigma}\hat{\bs{\Sigma}}(\bs{Y}) &= \diag\left[\bs{V}(\bs{D} - \bs{D}^{(q)})^2\bs{V}^{\intercal}\right] / (n - q),
\end{align}
for any $\pi \in [0, 1]$. (Without loss of generality, for the
remainder of this paper, we let $\pi = 1$.) Notably, this factor
analysis is the only option in the R package \texttt{ruv}
\citep{bartsch2015ruv}.

For the rest of this paper, we use RUV2 to refer to Procedure
\ref{algorithm:ruv2.new} and not Procedure \ref{algorithm:ruv2.gag},
even if the factor analysis used is \emph{not} orthogonally
equivariant. (By Theorem \ref{theorem:ruv2.equal}, this corresponds to
Procedure \ref{algorithm:ruv2.gag} with some other factor analysis.)

\section{RUV3}
\label{section:ruv3.full}
\citet{gagnon2013removing} provide a lengthy heuristic discussion
comparing RUV2 with RUV4 (their section 3.4). However, they provide no
mathematical equivalencies.  In this section, we introduce RUV3, a
procedure that is both a version of RUV2 and a version RUV4. We show
that it is the only such procedure that is both RUV2 and RUV4. RUV3
uses a partitioned matrix imputation procedure from \citet{owen2016bi}
to estimate the hidden factors. The coefficients are then estimated as
in RUV2 and RUV4.

\subsection{The RUV3 Procedure}
The main goal in all methods is to estimate
$\bs{\beta}_{2\bar{\mathcal{C}}}$, the coefficients corresponding to
the non-control genes. This involves incorporating information from
four models
\begin{align}
\bs{Y}_{2\mathcal{C}} &= \bs{Z}_{2}\bs{\alpha}_{\mathcal{C}} + \bs{E}_{2\mathcal{C}}\\
\bs{Y}_{2\bar{\mathcal{C}}} &= \bs{R}_{22}\bs{\beta}_2 + \bs{Z}_{2}\bs{\alpha}_{\bar{\mathcal{C}}} + \bs{E}_{2\bar{\mathcal{C}}}\\
\bs{Y}_{3\mathcal{C}} &= \bs{Z}_{3}\bs{\alpha}_{\mathcal{C}} + \bs{E}_{3\mathcal{C}}\\
\bs{Y}_{3\bar{\mathcal{C}}} &= \bs{Z}_{3}\bs{\alpha}_{\bar{\mathcal{C}}} + \bs{E}_{3\bar{\mathcal{C}}}.
\end{align}
We can rearrange the rows and columns of ${\bs{Y}_{2} \choose
  \bs{Y}_{3}}$ to write these models in matrix form:
\begin{align}
\label{equation:matrix.form}
\left(
\begin{array}{cc}
\bs{Y}_{2\mathcal{C}} & \bs{Y}_{2\bar{\mathcal{C}}}\\
\bs{Y}_{3\mathcal{C}} & \bs{Y}_{3\bar{\mathcal{C}}}
\end{array}
\right)
=
\left(
\begin{array}{cc}
\bs{Z}_{2}\bs{\alpha}_{\mathcal{C}} + \bs{E}_{2\mathcal{C}} & \bs{R}_{22}\bs{\beta}_2 + \bs{Z}_{2}\bs{\alpha}_{\bar{\mathcal{C}}} + \bs{E}_{2\bar{\mathcal{C}}}\\
\bs{Z}_{3}\bs{\alpha}_{\mathcal{C}} + \bs{E}_{3\mathcal{C}} & \bs{Z}_{3}\bs{\alpha}_{\bar{\mathcal{C}}} + \bs{E}_{3\bar{\mathcal{C}}}
\end{array}
\right).
\end{align}

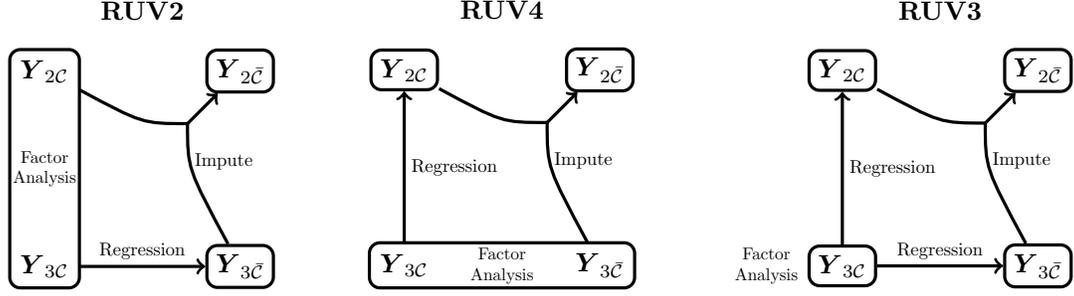
\begin{figure}
\begin{center}
\begin{tikzpicture}
\tikzstyle{every path}=[very thick];
\node at (0, 2.1){\textbf{RUV2}};
\path (-1.3, 1.3) node (y2c) {$\bs{Y}_{2\mathcal{C}}$}
(-1.3, -1.3) node (y3c) {$\bs{Y}_{3\mathcal{C}}$}
(1.3, -1.3) node[rounded corners, draw](y3cbar) {$\bs{Y}_{3\bar{\mathcal{C}}}$}
(1.3, 1.3) node[rounded corners, draw](y2cbar) {$\bs{Y}_{2\bar{\mathcal{C}}}$};
\coordinate (meetup) at (0.6, 0.6);
\path[->] (y3c) edge[above] node[scale = 0.7]{Regression} (y3cbar);
\draw (y3cbar) .. controls(0.6, 0) .. (meetup);
\draw (y2c) .. controls(0, 0.6) .. (meetup);
\draw[->] (meetup) -- (y2cbar);
\node[scale = 0.7, right](impute) at (0.6, 0.1){Impute};
\node[draw = black, fit = (y2c) (y3c), rounded corners, inner sep=0ex]{};
\node[scale = 0.65, align = center](fa) at (-1.3, 0){Factor\\Analysis};
\end{tikzpicture}
\hspace{1cm}
\begin{tikzpicture}
\tikzstyle{every path}=[very thick];
\node at (0, 2.1){\textbf{RUV4}};
\path (-1.3, 1.3) node[rounded corners, draw](y2c) {$\bs{Y}_{2\mathcal{C}}$}
(-1.3, -1.3) node (y3c) {$\bs{Y}_{3\mathcal{C}}$}
(1.3, -1.3) node (y3cbar) {$\bs{Y}_{3\bar{\mathcal{C}}}$}
(1.3, 1.3) node[rounded corners, draw](y2cbar) {$\bs{Y}_{2\bar{\mathcal{C}}}$};
\coordinate (meetup) at (0.6, 0.6);
\path[->] (y3c) edge[right] node[scale = 0.7]{Regression} (y2c);
\draw (y3cbar) .. controls(0.6, 0) .. (meetup);
\draw (y2c) .. controls(0, 0.6) .. (meetup);
\draw[->] (meetup) -- (y2cbar);
\node[scale = 0.7, right](impute) at (0.6, 0.1){Impute};
\node[draw = black, fit = (y3c) (y3cbar), rounded corners, inner sep=0ex]{};
\node[scale = 0.65, align = center](fa) at (0, -1.3){Factor\\Analysis};
\end{tikzpicture}
\hspace{1cm}
\begin{tikzpicture}
\tikzstyle{every path}=[very thick];
\node at (0, 2.1){\textbf{RUV3}};
\path (-1.3, 1.3) node[rounded corners, draw] (y2c) {$\bs{Y}_{2\mathcal{C}}$}
(-1.3, -1.3) node[rounded corners, draw] (y3c) {$\bs{Y}_{3\mathcal{C}}$}
(1.3, -1.3) node[rounded corners, draw](y3cbar) {$\bs{Y}_{3\bar{\mathcal{C}}}$}
(1.3, 1.3) node[rounded corners, draw](y2cbar) {$\bs{Y}_{2\bar{\mathcal{C}}}$};
\coordinate (meetup) at (0.6, 0.6);
\path[->] (y3c) edge[above] node[scale = 0.7]{Regression} (y3cbar);
\path[->] (y3c) edge[right] node[scale = 0.7]{Regression} (y2c);
\draw (y3cbar) .. controls(0.6, 0) .. (meetup);
\draw (y2c) .. controls(0, 0.6) .. (meetup);
\draw[->] (meetup) -- (y2cbar);
\node[scale = 0.7, right](impute) at (0.6, 0.1){Impute};
\node[scale = 0.65, align = center, left](fa) at (-1.8, -1.3){Factor\\Analysis};
\end{tikzpicture}
\caption{Pictorial representation of the differences between RUV2, RUV4, and RUV3.}
\label{fig:ruv.rep}
\end{center}
\end{figure}

The major difference between RUV2 and RUV4 is how the estimation
procedures interact in \eqref{equation:matrix.form}; see Figure
\ref{fig:ruv.rep} for a pictorial
representation. RUV2 performs factor analysis on
${\bs{Y}_{2\mathcal{C}} \choose \bs{Y}_{3\mathcal{C}}}$, then
regresses $\bs{Y}_{3\bar{\mathcal{C}}}$ on the estimated factor loadings. RUV4 performs factor
analysis on $(\bs{Y}_{3\mathcal{C}}, \bs{Y}_{3\bar{\mathcal{C}}})$,
then regresses $\bs{Y}_{2\mathcal{C}}$ on the estimated factors. The main goal in both, however, is to
estimate $\bs{Z}_{2}\bs{\alpha}_{\bar{\mathcal{C}}}$ given
$\bs{Y}_{2\mathcal{C}}$, $\bs{Y}_{3\mathcal{C}}$, and
$\bs{Y}_{3\bar{\mathcal{C}}}$. It is not clear why one should prefer either the rows or the columns to perform a factor analysis first,
then perform a respective regression on the columns or rows.

We can frame the estimation of
$\bs{Z}_{2}\bs{\alpha}_{\bar{\mathcal{C}}}$ as a
matrix imputation problem where the missing values are a
submatrix. That is, we try to estimate
$\bs{Z}_{2}\bs{\alpha}_{\bar{\mathcal{C}}}$ given only
the values of $\bs{Y}_{2\mathcal{C}}$,
$\bs{Y}_{3\mathcal{C}}$, and
$\bs{Y}_{3\bar{\mathcal{C}}}$.  In the context of matrix
imputation (and not removing unwanted variation), \citet{owen2016bi},
generalizing the methods of \citet{owen2009bi} to the case of
heteroscedastic noise, suggest that after applying a factor analysis
to $\bs{Y}_{3\mathcal{C}}$, one use the estimates
\begin{align}
  \hat{\bs{Z}}_2 &:= \bs{Y}_{2\mathcal{C}} \hat{\bs{\Sigma}}_{\mathcal{C}}^{-1} \hat{\bs{\alpha}}_{\mathcal{C}}^{\intercal}(\hat{\bs{\alpha}}_{\mathcal{C}}\hat{\bs{\Sigma}}_{\mathcal{C}}^{-1}\hat{\bs{\alpha}}_{\mathcal{C}}^{\intercal})^{-1},\\
  \hat{\bs{\alpha}}_{\bar{\mathcal{C}}} &:= (\hat{\bs{Z}}_3^{\intercal}\hat{\bs{Z}}_3)^{-1}\hat{\bs{Z}}_3^{\intercal}\bs{Y}_{3\bar{\mathcal{C}}},
\end{align}
and then set
$\widehat{\bs{Z}_{2}\bs{\alpha}_{\bar{\mathcal{C}}}} =
\hat{\bs{Z}}_2\hat{\bs{\alpha}}_{\bar{\mathcal{C}}}$.
This corresponds to a factor analysis followed by two regressions
followed by an imputation step.  Following the theme of this paper, we
would add an additional step and estimate
$\bs{\beta}_{2\bar{\mathcal{C}}}$ with
\begin{align}
\hat{\bs{\beta}}_{2\bar{\mathcal{C}}} = \bs{R}_{22}^{-1}(\bs{Y}_{2\bar{\mathcal{C}}} - \hat{\bs{Z}}_2\hat{\bs{\alpha}}_{\bar{\mathcal{C}}}).
\end{align}

In this section, we prove that this estimation procedure, presented in
Procedure \ref{algorithm:ruv3}, is a version of both RUV2 (Section
\ref{section:ruv2.connection}) and RUV4 (Section
\ref{section:ruv4.connection}). Indeed, it is the only such procedure
that is both a version of RUV2 and RUV4 (Section
\ref{section:ruv24iff3}). As such, we call it RUV3 (also presented
pictorially in Figure \ref{fig:ruv.rep}). Like RUV2 and RUV4, RUV3 is
a class of methods indexed by the factor analysis used. We sometimes
explicitly denote this indexing by $\text{RUV3}(\mathcal{F})$.

\begin{algorithm}
  \floatname{algorithm}{Procedure}
  \caption{RUV3}
  \label{algorithm:ruv3}
  \begin{algorithmic}[1]
    \STATE Perform factor analysis on $\bs{Y}_{3\mathcal{C}}$ to
    obtain estimates of $\bs{Z}_3$,
    $\bs{\alpha}_{\mathcal{C}}$ and
    $\bs{\Sigma}_{\mathcal{C}}$.
    \STATE Regress $\bs{Y}_{2\mathcal{C}}$ on
    $\hat{\bs{\alpha}}_{\mathcal{C}}$ to obtain an estimate of
    $\bs{Z}_2$ and regress $\bs{Y}_{3\bar{\mathcal{C}}}$
    on $\hat{\bs{Z}}_3$ to obtain estimates of
    $\bs{\alpha}_{\bar{\mathcal{C}}}$ and
    $\bs{\Sigma}_{\bar{\mathcal{C}}}$. That is
    \begin{align}
      \label{equation:ruv3.z2}\hat{\bs{Z}}_2 &:= \bs{Y}_{2\mathcal{C}} \hat{\bs{\Sigma}}_{\mathcal{C}}^{-1} \hat{\bs{\alpha}}_{\mathcal{C}}^{\intercal}(\hat{\bs{\alpha}}_{\mathcal{C}}\hat{\bs{\Sigma}}_{\mathcal{C}}^{-1}\hat{\bs{\alpha}}_{\mathcal{C}}^{\intercal})^{-1},\\
      \label{equation:ruv3.alpha}\hat{\bs{\alpha}}_{\bar{\mathcal{C}}} &:= (\hat{\bs{Z}}_3^{\intercal}\hat{\bs{Z}}_3)^{-1}\hat{\bs{Z}}_3^{\intercal}\bs{Y}_{3\bar{\mathcal{C}}},\\
      \hat{\bs{\Sigma}}_{\bar{\mathcal{C}}} &:= \diag\left[(\bs{Y}_{3\bar{\mathcal{C}}} - \hat{\bs{Z}}_{3}\hat{\bs{\alpha}}_{\bar{\mathcal{C}}})^{\intercal}(\bs{Y}_{3\bar{\mathcal{C}}} - \hat{\bs{Z}}_{3}\hat{\bs{\alpha}}_{\bar{\mathcal{C}}})\right] / (n - k - q).
    \end{align}
    \STATE Estimate $\bs{\beta}_2$ by
    \begin{align}
      \hat{\bs{\beta}}_2 = \bs{R}_{22}^{-1}(\bs{Y}_{2\bar{\mathcal{C}}} - \hat{\bs{Z}}_2\hat{\bs{\alpha}}_{\bar{\mathcal{C}}}).
    \end{align}
  \end{algorithmic}
\end{algorithm}

\subsection{Connection to RUV4}
\label{section:ruv4.connection}
The astute reader will have noticed that \eqref{equation:ruv3.z2} is
the same as \eqref{equation:cate.z2.est}. RUV3 can be viewed as a
version of RUV4 with a particular factor analysis. Specifically, any
factor analysis applied during RUV4 of the following form will result
in RUV3:
\begin{align}
\{\hat{\bs{\Sigma}}(\bs{Y}_{3\mathcal{C}},\bs{Y}_{3\bar{\mathcal{C}}})&, \hat{\bs{Z}}_3(\bs{Y}_{3\mathcal{C}},\bs{Y}_{3\bar{\mathcal{C}}}), \hat{\bs{\alpha}}(\bs{Y}_{3\mathcal{C}},\bs{Y}_{3\bar{\mathcal{C}}})\} \text{ such that }\\
\label{equation:sigmahat.ruv4}\hat{\bs{\Sigma}}_{\mathcal{C}}(\bs{Y}_{3\mathcal{C}},\bs{Y}_{3\bar{\mathcal{C}}})& = \hat{\bs{\Sigma}}_{\mathcal{C}}(\bs{Y}_{3\mathcal{C}})\\
\hat{\bs{Z}}_3(\bs{Y}_{3\mathcal{C}},\bs{Y}_{3\bar{\mathcal{C}}}) &= \hat{\bs{Z}}_3(\bs{Y}_{3\mathcal{C}}), \text{ and}\\
\hat{\bs{\alpha}}(\bs{Y}_{3\mathcal{C}},\bs{Y}_{3\bar{\mathcal{C}}}) &= (\hat{\bs{\alpha}}_{\mathcal{C}}(\bs{Y}_{3\mathcal{C}}), [\hat{\bs{Z}}_3^{\intercal}\hat{\bs{Z}}_3]^{-1}\hat{\bs{Z}}_3^{\intercal}\bs{Y}_{3\bar{\mathcal{C}}}).\label{equation:alphahat.ruv4}
\end{align}
Or, more simply, RUV4 is equal to RUV3 if, in RUV4 one uses any
factor analysis such that
$\hat{\bs{\alpha}}_{\bar{\mathcal{C}}} =
(\hat{\bs{Z}}_3^{\intercal}\hat{\bs{Z}}_3)^{-1}\hat{\bs{Z}}_3^{\intercal}\bs{Y}_{3\bar{\mathcal{C}}}$
and $\hat{\bs{\Sigma}}_{\mathcal{C}}$,
$\hat{\bs{Z}}_3$, and
$\hat{\bs{\alpha}}_{\mathcal{C}}$ are functions of
$\bs{Y}_{3\mathcal{C}}$ but \emph{not}
$\bs{Y}_{3\bar{\mathcal{C}}}$

Actually, using a truncated SVD on
$(\bs{Y}_{3\mathcal{C}},\bs{Y}_{3\bar{\mathcal{C}}})$ (equations
\eqref{equation:tsvd.z} to \eqref{equation:tsvd.sigma}) results in the
equalities \eqref{equation:sigmahat.ruv4} to
\eqref{equation:alphahat.ruv4} if one ignores the functional
dependencies. That is, using the truncated SVD on
$(\bs{Y}_{3\mathcal{C}},\bs{Y}_{3\bar{\mathcal{C}}})$ one can easily
prove the relationships
\begin{align}
\hat{\bs{Z}}_3(\bs{Y}_{3\mathcal{C}},\bs{Y}_{3\bar{\mathcal{C}}}) &= \hat{\bs{Z}}_3(\bs{Y}_{3\mathcal{C}}, \bs{Y}_{3\bar{\mathcal{C}}}),\\
\hat{\bs{\alpha}}(\bs{Y}_{3\mathcal{C}},\bs{Y}_{3\bar{\mathcal{C}}}) &= (\hat{\bs{\alpha}}_{\mathcal{C}}(\bs{Y}_{3\mathcal{C}}, \bs{Y}_{3\bar{\mathcal{C}}}), [\hat{\bs{Z}}_3^{\intercal}\hat{\bs{Z}}_3]^{-1}\hat{\bs{Z}}_3^{\intercal}\bs{Y}_{3\bar{\mathcal{C}}}).
\end{align}
The main difference, then, between RUV3 and RUV4 as implemented in the
\texttt{ruv} R package \citep{bartsch2015ruv} is that in RUV3
$\hat{\bs{Z}}_3$ has a functional dependence only on
$\bs{Y}_{3\mathcal{C}}$ and \emph{not}
$\bs{Y}_{3\bar{\mathcal{C}}}$.

\subsection{Connection to RUV2}
\label{section:ruv2.connection}
Similar to the relationship between RUV3 and RUV4, RUV3 may also be
viewed as a version of RUV2 with a particular factor
analysis. Specifically any factor analysis applied during RUV2 of the
following form will result in RUV3:
\begin{align}
\{\hat{\bs{\Sigma}}_{\mathcal{C}}(\bs{Y}_{2\mathcal{C}},\bs{Y}_{3\mathcal{C}})&, \hat{\bs{Z}}(\bs{Y}_{2\mathcal{C}},\bs{Y}_{3\mathcal{C}}), \hat{\bs{\alpha}}_{\mathcal{C}}(\bs{Y}_{2\mathcal{C}},\bs{Y}_{3\mathcal{C}})\} \text{ such that }\\
\label{equation:sigmahat.ruv2}\hat{\bs{\Sigma}}_{\mathcal{C}}(\bs{Y}_{2\mathcal{C}},\bs{Y}_{3\mathcal{C}})& = \hat{\bs{\Sigma}}_{\mathcal{C}}(\bs{Y}_{3\mathcal{C}})\\
\hat{\bs{\alpha}}_{\mathcal{C}}(\bs{Y}_{2\mathcal{C}},\bs{Y}_{3\mathcal{C}}) &= \hat{\bs{\alpha}}_{\mathcal{C}}(\bs{Y}_{3\mathcal{C}}), \text{ and}\\
\label{equation:ruv2.gls}
\hat{\bs{Z}}(\bs{Y}_{2\mathcal{C}},\bs{Y}_{3\mathcal{C}}) &=
\left(
\begin{array}{c}
\bs{Y}_{2\mathcal{C}} \hat{\bs{\Sigma}}_{\mathcal{C}}^{-1} \hat{\bs{\alpha}}_{\mathcal{C}}^{\intercal}(\hat{\bs{\alpha}}_{\mathcal{C}}\hat{\bs{\Sigma}}_{\mathcal{C}}^{-1}\hat{\bs{\alpha}}_{\mathcal{C}}^{\intercal})^{-1}\\
\hat{\bs{Z}}_3(\bs{Y}_{3\mathcal{C}})
\end{array}
\right).
\end{align}
In simpler terms, RUV2 is the same as RUV3 if, in RUV2 one uses any
factor analysis such that $\hat{\bs{Z}}_2 =
\bs{Y}_{2\mathcal{C}}
\hat{\bs{\Sigma}}_{\mathcal{C}}^{-1}
\hat{\bs{\alpha}}_{\mathcal{C}}^{\intercal}(\hat{\bs{\alpha}}_{\mathcal{C}}\hat{\bs{\Sigma}}_{\mathcal{C}}^{-1}\hat{\bs{\alpha}}_{\mathcal{C}}^{\intercal})^{-1}$
and $\hat{\bs{\alpha}}_{\mathcal{C}}$,
$\hat{\bs{Z}}_3$, and
$\bs{\bs{\Sigma}}_{\mathcal{C}}$ are functions of
$\bs{Y}_{3\mathcal{C}}$ but not
$\bs{Y}_{2\mathcal{C}}$.

Similar to Section \ref{section:ruv4.connection}, using the truncated
SVD on ${\bs{Y}_{2\mathcal{C}}\choose\bs{Y}_{3\mathcal{C}}}$ ---
assuming homoscedastic variances rather than heteroscedastic variances
--- will result in equalities \eqref{equation:sigmahat.ruv2} to
\eqref{equation:ruv2.gls} except for the functional dependencies. That
is, when using the truncated SVD on
${\bs{Y}_{2\mathcal{C}}\choose\bs{Y}_{3\mathcal{C}}}$ one can show
that the following relationships hold:
\begin{align}
\hat{\bs{\alpha}}_{\mathcal{C}}(\bs{Y}_{2\mathcal{C}},\bs{Y}_{3\mathcal{C}}) &= \hat{\bs{\alpha}}_{\mathcal{C}}(\bs{Y}_{2\mathcal{C}}, \bs{Y}_{3\mathcal{C}}), \text{ and}\\
\hat{\bs{Z}}(\bs{Y}_{2\mathcal{C}},\bs{Y}_{3\mathcal{C}}) &=
\left(
\begin{array}{c}
\bs{Y}_{2\mathcal{C}}\hat{\bs{\alpha}}_{\mathcal{C}}^{\intercal}(\hat{\bs{\alpha}}_{\mathcal{C}}\hat{\bs{\alpha}}_{\mathcal{C}}^{\intercal})^{-1}\\
\hat{\bs{Z}}_3(\bs{Y}_{2\mathcal{C}},\bs{Y}_{3\mathcal{C}})
\end{array}
\right).
\end{align}
The main difference, then, between RUV2 and RUV3 is that in RUV3
$\hat{\bs{\alpha}}_{\mathcal{C}}$ has a functional dependence
only on $\bs{Y}_{3\mathcal{C}}$ and \emph{not}
$\bs{Y}_{2\mathcal{C}}$.

One apparent disadvantage to using the original RUV2 pipeline
(Procedure \ref{algorithm:ruv2.gag} or \ref{algorithm:ruv2.new}), is
that after the factor analysis of step 1, one already has estimates of
$\bs{\alpha}_{\mathcal{C}}$ and
$\bs{\Sigma}_{\mathcal{C}}$, yet one ignores these estimates
and re-estimates them in step 2 with
$\hat{\bs{\alpha}}_{\mathcal{C}} =
(\hat{\bs{Z}}_3^{\intercal}\hat{\bs{Z}}_3)^{-1}\hat{\bs{Z}}_3^{\intercal}\hat{\bs{Y}}_{3C}$
and $\hat{\bs{\Sigma}}_{\mathcal{C}} =
\diag[(\bs{Y}_{3\mathcal{C}} -
\hat{\bs{Z}}_3\hat{\bs{\alpha}}_{\mathcal{C}})^{\intercal}(\bs{Y}_{3\mathcal{C}}
- \hat{\bs{Z}}_3\hat{\bs{\alpha}}_{\mathcal{C}})] / (n
- q)$. In Procedures \ref{algorithm:ruv2.gag} or
\ref{algorithm:ruv2.new}, the estimates from step 1 are in general
different from the estimates from step 2 and it is not at all clear
which estimates one should prefer. RUV3 obviates this problem by the
constrained factor analysis \eqref{equation:ruv2.gls}.

\subsection{RUV2 and RUV4 if and only if RUV3}
\label{section:ruv24iff3}
We have shown in Sections \ref{section:ruv4.connection} and
\ref{section:ruv2.connection} that RUV3 can be considered a variant of
both RUV2 and RUV4. But the converse is easily proved to also be true.
\begin{theorem}
  Suppose a procedure is both a version of RUV4 (Procedure
  \ref{algorithm:ruv4}) and RUV2 (Procedure
  \ref{algorithm:ruv2.new}). Then it is also a version of RUV3
  (Procedure \ref{algorithm:ruv3}).
\end{theorem}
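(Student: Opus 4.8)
The plan is to characterize what it means for a procedure to be simultaneously a version of RUV4 and a version of RUV2, and then show that these two sets of constraints, imposed together, force exactly the estimators that define RUV3 (Procedure \ref{algorithm:ruv3}). The key observation is that Sections \ref{section:ruv4.connection} and \ref{section:ruv2.connection} already give explicit necessary-and-sufficient descriptions of the factor analyses under which RUV4 and RUV2, respectively, reduce to RUV3. The converse direction I must establish is that \emph{any} procedure lying in both the RUV4 class and the RUV2 class automatically satisfies those RUV3-defining constraints. So the strategy is: read off the form of $\hat{\bs{Z}}_2$ and $\hat{\bs{\alpha}}_{\bar{\mathcal{C}}}$ that each framework dictates, intersect the two, and verify the intersection is precisely Procedure \ref{algorithm:ruv3}.

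\textbf{First} I would fix notation by writing the common target of all three procedures: every version estimates $\bs{\beta}_{2\bar{\mathcal{C}}}$ via $\hat{\bs{\beta}}_{2\bar{\mathcal{C}}} = \bs{R}_{22}^{-1}(\bs{Y}_{2\bar{\mathcal{C}}} - \hat{\bs{Z}}_2\hat{\bs{\alpha}}_{\bar{\mathcal{C}}})$, so since $\bs{R}_{22}$ and $\bs{Y}_{2\bar{\mathcal{C}}}$ are fixed, two procedures agree if and only if they produce the same product $\hat{\bs{Z}}_2\hat{\bs{\alpha}}_{\bar{\mathcal{C}}}$ (equivalently, the same imputed submatrix $\widehat{\bs{Z}_2\bs{\alpha}_{\bar{\mathcal{C}}}}$). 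Being a version of RUV4 means $\hat{\bs{Z}}_2$ is formed by the GLS step \eqref{equation:cate.z2.est} using an $\hat{\bs{\alpha}}_{\mathcal{C}}$ and $\hat{\bs{\Sigma}}_{\mathcal{C}}$ coming from a factor analysis of $(\bs{Y}_{3\mathcal{C}}, \bs{Y}_{3\bar{\mathcal{C}}})$, and $\hat{\bs{\alpha}}$ is whatever that same factor analysis returns. Being a version of RUV2 (Procedure \ref{algorithm:ruv2.new}) means $\hat{\bs{Z}}_2$ comes from a factor analysis on ${\bs{Y}_{2\mathcal{C}} \choose \bs{Y}_{3\mathcal{C}}}$ and $\hat{\bs{\alpha}}_{\bar{\mathcal{C}}}$ is obtained by the regression $(\hat{\bs{Z}}_3^{\intercal}\hat{\bs{Z}}_3)^{-1}\hat{\bs{Z}}_3^{\intercal}\bs{Y}_{3\bar{\mathcal{C}}}$.

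\textbf{Next} I would extract the forced functional dependencies. The RUV2 side immediately pins down $\hat{\bs{\alpha}}_{\bar{\mathcal{C}}} = (\hat{\bs{Z}}_3^{\intercal}\hat{\bs{Z}}_3)^{-1}\hat{\bs{Z}}_3^{\intercal}\bs{Y}_{3\bar{\mathcal{C}}}$, which is exactly \eqref{equation:ruv3.alpha}; this matches the characterization \eqref{equation:alphahat.ruv4} that the RUV4 side needs, so consistency forces that regression form. Symmetrically, the RUV4 side computes $\hat{\bs{Z}}_3$, $\hat{\bs{\alpha}}_{\mathcal{C}}$, and $\hat{\bs{\Sigma}}_{\mathcal{C}}$ from a factor analysis of $\bs{Y}_3$ only through the control-gene block in the GLS formula, and $\hat{\bs{Z}}_2$ is then \eqref{equation:cate.z2.est}, which is identical to \eqref{equation:ruv3.z2}; meanwhile the RUV2 side requires these same quantities to arise as the factor analysis of the control block ${\bs{Y}_{2\mathcal{C}} \choose \bs{Y}_{3\mathcal{C}}}$. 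The point is that for a single procedure to live in both classes, $\hat{\bs{Z}}_3$, $\hat{\bs{\alpha}}_{\mathcal{C}}$, $\hat{\bs{\Sigma}}_{\mathcal{C}}$ can depend on neither $\bs{Y}_{3\bar{\mathcal{C}}}$ (else it is not a valid RUV2 control-block factor analysis) nor $\bs{Y}_{2\mathcal{C}}$ (else it is not a valid RUV4 control-block factor analysis), hence they depend only on $\bs{Y}_{3\mathcal{C}}$ — and this is precisely the single-block factor analysis of step 1 of Procedure \ref{algorithm:ruv3}. Assembling these, the common $\hat{\bs{Z}}_2$, $\hat{\bs{\alpha}}_{\bar{\mathcal{C}}}$, and $\hat{\bs{\Sigma}}$ are exactly the RUV3 estimators.

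\textbf{The main obstacle} will be handling the functional-dependence bookkeeping rigorously rather than just matching formulas. A factor analysis in the RUV4 framework is a function of the whole matrix $(\bs{Y}_{3\mathcal{C}}, \bs{Y}_{3\bar{\mathcal{C}}})$, and in the RUV2 framework a function of $(\bs{Y}_{2\mathcal{C}}, \bs{Y}_{3\mathcal{C}})$; I must argue that requiring \emph{the resulting estimators} (not just some reparametrization of them) to coincide as both $\bs{Y}_{2\mathcal{C}}$ and $\bs{Y}_{3\bar{\mathcal{C}}}$ vary forces the constant-in-those-arguments behavior claimed above. Concretely, I would fix $\bs{Y}_{3\mathcal{C}}$ and vary $\bs{Y}_{3\bar{\mathcal{C}}}$: the RUV2 representation makes $\hat{\bs{Z}}_2$, $\hat{\bs{\alpha}}_{\mathcal{C}}$, $\hat{\bs{\Sigma}}_{\mathcal{C}}$, and $\hat{\bs{Z}}_3$ independent of $\bs{Y}_{3\bar{\mathcal{C}}}$, so the RUV4-side factor analysis — which is a priori allowed to depend on $\bs{Y}_{3\bar{\mathcal{C}}}$ — must in fact produce $\hat{\bs{\Sigma}}_{\mathcal{C}}$, $\hat{\bs{Z}}_3$, $\hat{\bs{\alpha}}_{\mathcal{C}}$ that are constant in $\bs{Y}_{3\bar{\mathcal{C}}}$; then varying $\bs{Y}_{2\mathcal{C}}$ with the symmetric argument removes dependence on $\bs{Y}_{2\mathcal{C}}$. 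Once these dependencies are pinned down, verifying the estimators equal \eqref{equation:ruv3.z2}–\eqref{equation:ruv3.alpha} is the routine formula-matching already displayed in the two connection subsections, so the subtlety lies entirely in making the \emph{only}-claim (uniqueness) precise.
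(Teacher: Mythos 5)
Your proposal is correct and takes essentially the same route as the paper's own proof: RUV2 membership forces the regression form of $\hat{\bs{\alpha}}_{\bar{\mathcal{C}}}$ (equation \eqref{equation:ruv3.alpha}), RUV4 membership forces the GLS form of $\hat{\bs{Z}}_2$ (equation \eqref{equation:ruv3.z2}), and membership in both classes forces $\hat{\bs{Z}}_3$, $\hat{\bs{\alpha}}_{\mathcal{C}}$, and $\hat{\bs{\Sigma}}_{\mathcal{C}}$ to be functions of $\bs{Y}_{3\mathcal{C}}$ alone, since RUV4 permits dependence only on $(\bs{Y}_{3\mathcal{C}}, \bs{Y}_{3\bar{\mathcal{C}}})$ while RUV2 permits dependence only on $(\bs{Y}_{2\mathcal{C}}, \bs{Y}_{3\mathcal{C}})$. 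Your fix-and-vary elaboration of the functional-dependence step is simply a more explicit rendering of the intersection argument the paper states tersely.
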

\begin{proof}
  If the procedure is a version of RUV2, then
  \eqref{equation:alphahat.ruv2.rotate} holds. But if the procedure is
  a version of RUV4, then \eqref{equation:cate.z2.est} holds. These
  are two properties of RUV3 (equations \eqref{equation:ruv3.z2} and
  \eqref{equation:ruv3.alpha}).

  It remains to show that $\hat{\bs{Z}}_3$ and
  $\hat{\bs{\alpha}}_{\mathcal{C}}$ are functions only of
  $\bs{Y}_{3\mathcal{C}}$. But this is clear since if the
  procedure is RUV4, these quantities are functions only of
  $\bs{Y}_{3\mathcal{C}}$ and
  $\bs{Y}_{3\bar{\mathcal{C}}}$, while if the procedure is
  RUV2 these quantities are functions only of
  $\bs{Y}_{2\mathcal{C}}$ and
  $\bs{Y}_{3\mathcal{C}}$. Since the procedure is both RUV2
  and RUV4, this necessarily implies that these quantities are
  functions only of $\bs{Y}_{3\mathcal{C}}$.
\end{proof}
To summarize, RUV3 can be viewed as both a version of RUV2 and a version of RUV4
and if a procedure is both a version of RUV2 and a version of RUV4 then it is a
version of RUV3.

\section{A more general framework:\ RUV*}
\label{section:ruv5.full}
A key insight that arises from our unification of RUV2 and RUV4 (and
RUV3) into a single framework is that they share a common goal:
estimation of $\bs{Z}_2 \bs{\alpha}_{\bar{\mathcal{C}}}$, which
represents the combined effects of all sources of unwanted variation
on $\bs{Y}_{2\bar{\mathcal{C}}}$.

This insight suggests a more general approach to the problem:\ any
matrix imputation procedure could be used to estimate
$\bs{Z}_2 \bs{\alpha}_{\bar{\mathcal{C}}}$ --- RUV2, RUV3, and RUV4 are
just three versions that rely heavily on linear associations between
submatrices. Indeed, we need not even assume a factor model for the
form of the unwanted variation. And we can further incorporate
uncertainty in the estimates. In this section we develop these ideas
to provide a more general framework for removing unwanted variation,
which we call RUV*.

\subsection{More general approaches to matrix imputation}

To generalize RUV methods to allow for more general approaches to
matrix imputation we generalize \eqref{equation:matrix.form} to
\begin{align}
\label{equation:general.model}
\left(
\begin{array}{cc}
\bs{Y}_{2\mathcal{C}} & \bs{Y}_{2\bar{\mathcal{C}}}\\
\bs{Y}_{3\mathcal{C}} & \bs{Y}_{3\bar{\mathcal{C}}}
\end{array}
\right)
=
\left(
\begin{array}{cc}
\bs{\Omega}(\bs{\phi})_{2\mathcal{C}} & \bs{\Omega}(\bs{\phi})_{2\bar{\mathcal{C}}}\\
\bs{\Omega}(\bs{\phi})_{3\mathcal{C}} & \bs{\Omega}(\bs{\phi})_{3\bar{\mathcal{C}}}
\end{array}
\right)
 +
\left(
\begin{array}{cc}
\bs{0} & \bs{R}_{22}\bs{\beta}_2\\
\bs{0} & \bs{0}
\end{array}
\right) +
\bs{E},
\end{align}
where $\bs{\Omega}$ is the unwanted variation parameterized by some
$\bs{\phi}$. When the unwanted variation is represented by a factor
model, we have that $\bs{\phi} = \{\bs{Z}, \bs{\alpha}\}$ and
$\bs{\Omega}(\bs{\phi}) = \bs{Z}\bs{\alpha}$.

The simplest version of RUV* fits this model in two steps:
\begin{enumerate}
\item Use any appropriate
procedure to estimate
$\bs{\Omega}_{2\bar{\mathcal{C}}}(\bs{\phi})$ given
$\{\bs{Y}_{2\mathcal{C}}, \bs{Y}_{3\mathcal{C}},
\bs{Y}_{3\bar{\mathcal{C}}}\}$;
\item Estimate $\bs{\beta}_2$ by
\begin{align}
\bs{R}_{22}^{-1}(\bs{Y}_{2\bar{\mathcal{C}}} - \bs{\Omega}_{2\bar{\mathcal{C}}}(\hat{\bs{\phi}})).
\end{align}
\end{enumerate}

This idea, and its relationship with other RUV approaches, are
illustrated in Figure \ref{fig:ruv3.joint}.  Rather than restrict
factors to be estimated via linear regression, RUV* allows any
imputation procedure to be used to estimate
$\bs{\Omega}_{2\bar{\mathcal{C}}}(\bs{\phi})$.  This opens up a large
literature on matrix imputation for use in removing unwanted variation
with control genes \citep[for example]{hoff2007model,
  allen2010transposable, candes2010matrix, stekhoven2012missforest,
  van2012flexible, josse2016denoiser}.

\begin{figure}
\begin{center}
\begin{tikzpicture}
\node at (0, 2.1){\textbf{RUV*}};
\tikzstyle{every path}=[very thick];
\path (-1.3, 1.3) node (y2c) {$\bs{Y}_{2\mathcal{C}}$}
(-1.3, -1.3) node (y3c) {$\bs{Y}_{3\mathcal{C}}$}
(1.3, -1.3) node(y3cbar) {$\bs{Y}_{3\bar{\mathcal{C}}}$}
(1.3, 1.3) node[rounded corners, draw](y2cbar) {$\bs{Y}_{2\bar{\mathcal{C}}}$};
\draw (-1.8, 1.3) .. controls (-1.8, 1.8) .. (-1.3, 1.8) -- (1.8, -1.3) .. controls (1.8, -1.8) .. (1.3, -1.8) -- (-1.3, -1.8) .. controls(-1.8, -1.8) .. (-1.8, -1.3) -- (-1.8, 1.3);
\coordinate (meetup) at (0.25, 0.25);
\path[->] (meetup) edge[below right] node[scale = 0.7]{Impute} (y2cbar);
\end{tikzpicture}
\hspace{1cm}
\begin{tikzpicture}
\node at (0.5, 2.1){\textbf{RUV Relationships}};
\tikzstyle{every path}=[very thick];
\draw (0, -0.5) circle (0.8) (0, -1) node [text=black, scale = 0.7] {RUV4};
\draw (0, 0.5) circle (0.8) (0, 1) node [text=black, scale = 0.7] {RUV2};
\node[align = center, scale = 0.7](fa) at (0, 0) {RUV3};
\draw (.5, 0) circle (1.8) (1.5, 0) node [text=black, scale = 0.7] {RUV*};
\end{tikzpicture}
\hspace{1cm}
\begin{tikzpicture}
\tikzstyle{every path}=[very thick];
\node at (0, 2.1){\textbf{RUVfun}};
\path (-1.3, 1.3) node[rounded corners, draw](y2c) {$\bs{Y}_{2\mathcal{C}}$}
(-1.3, -1.3) node (y3c) {$\bs{Y}_{3\mathcal{C}}$}
(1.3, -1.3) node (y3cbar) {$\bs{Y}_{3\bar{\mathcal{C}}}$}
(1.3, 1.3) node[rounded corners, draw](y2cbar) {$\bs{Y}_{2\bar{\mathcal{C}}}$};
\coordinate (meetup) at (0.6, 0.6);
\path[->] (y3c) edge[right] node[scale = 0.7, align = left]{Train\\Any\\Function} (y2c);
\draw (y3cbar) .. controls(0.6, 0) .. (meetup);
\draw (y2c) .. controls(0, 0.6) .. (meetup);
\draw[->] (meetup) -- (y2cbar);
\node[scale = 0.7, right](impute) at (0.6, 0.1){Impute};
\node[draw = black, fit = (y3c) (y3cbar), rounded corners, inner sep=0ex]{};
\node[scale = 0.65, align = center](fa) at (0, -1.3){Factor\\Analysis};
\end{tikzpicture}
\caption{The left panel contains a pictorial representation of our generalized framework for removing unwanted variation, RUV*, in terms of matrix imputation. The center panel is a Venn diagram representing the
  relationships between RUV2, RUV3, RUV4, and RUV*. The right panel is
  a pictorial representation of RUVfun from
  \citet{gagnon2013removing}}
\label{fig:ruv3.joint}
\end{center}
\end{figure}
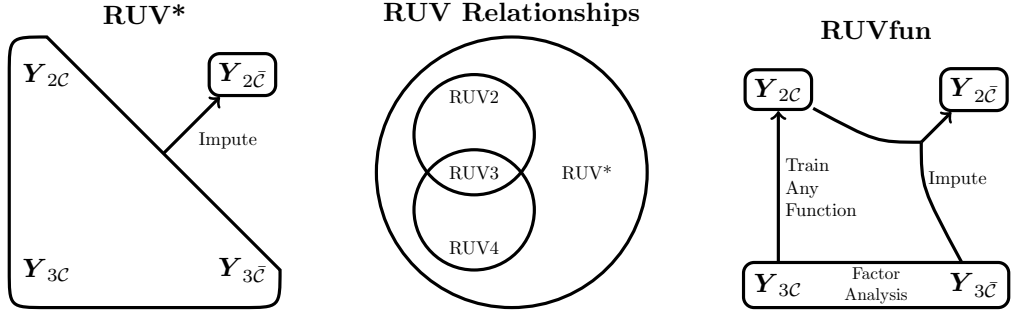

\subsection{Relationship to RUVfun}

\citet{gagnon2013removing} describe a general framework they call
RUVfun, for RUV-functional. In our notation, and within the rotated
model framework of Section \ref{subsection:rotate}, RUVfun may be
described as
\begin{enumerate}[noitemsep]
\item Perform factor analysis on
  $(\bs{Y}_{3\mathcal{C}},
  \bs{Y}_{3\bar{\mathcal{C}}})$
  to obtain an estimate $\hat{\bs{\alpha}}$,
\item Let $\hat{\bs{\alpha}}_j$ denote the $j$th column of
  $(\hat{\bs{\alpha}}_{\mathcal{C}},
  \hat{\bs{\alpha}}_{\bar{\mathcal{C}}})$
  and $\bs{y}_{j}$ denote the $j$th column of
  $(\bs{Y}_{2\mathcal{C}},
  \bs{Y}_{2\bar{\mathcal{C}}})$.
  Train a function $f$ using responses $\bs{y}_{j}$ and
  predictors $\hat{\bs{\alpha}}_{j}$ for $j = 1,\ldots,m$
  (recall, we have $m$ control genes). That is, fit
  \begin{align}
    \bs{y}_{j} \approx f(\hat{\bs{\alpha}}_{j}), \text{ for } j = 1,\ldots, m,
  \end{align}
and call the resulting predictor $\hat{f}$.
\item Estimate $\bs{Y}_{2\bar{\mathcal{C}}}$ using predictors
  $\hat{\bs{\alpha}}_{\bar{\mathcal{C}}}$. That is,
  \begin{align}
    \hat{\bs{y}}_{j} = \hat{f}(\hat{\bs{\alpha}}_{j}), \text{ for } j = m + 1,\ldots, p.
  \end{align}
\item Estimate $\bs{\beta}_{2\bar{\mathcal{C}}}$ with
  $\bs{R}_{22}^{-1}(\bs{Y}_{2\bar{\mathcal{C}}} -
  \hat{\bs{Y}}_{2\bar{\mathcal{C}}})$.
\end{enumerate}

This is the way it is presented in Section 3.8.2 of
\citet{gagnon2013removing}, but typically they take the factor
analysis step to mean just setting
$\hat{\bs{\alpha}} = (\bs{Y}_{3\mathcal{C}},
\bs{Y}_{3\bar{\mathcal{C}}})$.
A pictorial representation of RUVfun is presented in the third panel
of Figure \ref{fig:ruv3.joint}. The only difference between the RUVfun
diagram in Figure \ref{fig:ruv3.joint} and the RUV4 diagram in Figure
\ref{fig:ruv.rep} is that ``regression'' was changed to ``train any
function''.

RUVfun, though more general than RUV4, is a special case of RUV*. And
RUV* is more general:\ for example, RUV2 is a version of RUV* but not
of RUVfun.  Indeed, RUV* generalizes RUVfun in three key ways.  First,
RUVfun uses only one column of $\hat{\bs{\alpha}}$ to estimate one
column of $\bs{Y}_{2\bar{\mathcal{C}}}$ while RUV* allows for joint
estimation of $\bs{Y}_{2\bar{\mathcal{C}}}$. Second, RUVfun assumes
that each column of the rotated $\bs{Y}$ matrix is independent and
identically distributed \citep[page 41]{gagnon2013removing} while RUV*
does not. Indeed, some matrix imputation approaches use column
covariances to great effect \citep{allen2010transposable}. Third,
RUVfun uses only $\bs{Y}_{3\mathcal{C}}$ to train the prediction
function, whereas RUV* can use all elements in the rotated $\bs{Y}$
matrix.

\subsection{Incorporating uncertainty in estimated unwanted variation}
\label{section:ruvb}

Like previous RUV methods, the second step of RUV* described above
treats the estimate of $\bs{\Omega}_{2\bar{\mathcal{C}}}(\bs{\phi})$
from the first step as if it were ``known'' without error. Here we
generalize this, using Bayesian ideas to propagate the uncertainty
from the first step to the next.

Although the use of Bayesian methods in this context is not new
\citep{stegle2008accounting, stegle2010bayesian, fusi2012joint,
  stegle2012using}, our development here shares one of the great
advantages of the RUV methods:\ {\it modularity}.  That is, RUV
methods separate the analysis into smaller self-contained steps:\ the
factor analysis step and the regression step. Modularity is widely
used in many fields:\ mathematicians modularize results using
theorems, lemmas and corollaries; computer scientists modularize code
using functions and classes. Modularity has many benefits, including:\
(1) it is easier to conceptualize an approach if it is broken into
small simple steps, (2) it is easier to discover and correct mistakes,
and (3) it is easier to improve an approach by improving specific
steps. These advantages also apply to statistical analysis and methods
development. For example, in RUV if one wishes to use a new method for
factor analysis then this does not require a whole new approach ---
one simply replaces the truncated SVD with the new factor analysis.

To describe this generalized RUV* we
introduce a latent variable $\tilde{\bs{Y}}_{2\bar{\mathcal{C}}}$ and write
\eqref{equation:general.model} as
\begin{align}
\label{equation:matrix}
\left(
\begin{array}{cc}
\bs{Y}_{2\mathcal{C}} & \tilde{\bs{Y}}_{2\bar{\mathcal{C}}}\\
\bs{Y}_{3\mathcal{C}} & \bs{Y}_{3\bar{\mathcal{C}}}
\end{array}
\right)
&=
\bs{\Omega}(\bs{\phi}) + \bs{E},\\
\label{equation:true.second.model}
  \bs{Y}_{2\bar{\mathcal{C}}} &=
  \bs{R}_{22}\bs{\beta}_2 +
  \tilde{\bs{Y}}_{2\bar{\mathcal{C}}}.
\end{align}

Now consider the following two-step procedure:
\begin{enumerate}
\item  Use any appropriate
procedure to obtain a conditional distribution $h(\tilde{\bs{Y}}_{2\bar{\mathcal{C}}}) := p(\tilde{\bs{Y}}_{2\bar{\mathcal{C}}}|\mathcal{Y}_m)$,
where $\mathcal{Y}_m := \{\bs{Y}_{2\mathcal{C}},
\bs{Y}_{3\mathcal{C}}, \bs{Y}_{3\bar{\mathcal{C}}}\}$.
\item Perform inference for $\bs{\beta}_2$ using the likelihood
\begin{align}
L(\bs{\beta}_2) & := p(\bs{Y}_{2\bar{\mathcal{C}}},\mathcal{Y}_m | \bs{\beta}_2) \\
&= p(\mathcal{Y}_m) \int p(\bs{Y}_{2\bar{\mathcal{C}}} | \tilde{\bs{Y}}_{2\bar{\mathcal{C}}},\bs{\beta}_2) p(\tilde{\bs{Y}}_{2\bar{\mathcal{C}}} | \mathcal{Y}_m) \, d\tilde{\bs{Y}}_{2\bar{\mathcal{C}}} \\
&= p(\mathcal{Y}_m) \int \delta( \bs{Y}_{2\bar{\mathcal{C}}}  - \tilde{\bs{Y}}_{2\bar{\mathcal{C}}} - \bs{R}_{22}\bs{\beta}_2) p(\tilde{\bs{Y}}_{2\bar{\mathcal{C}}} | \mathcal{Y}_m) \, d\tilde{\bs{Y}}_{2\bar{\mathcal{C}}} \\
& = p(\mathcal{Y}_m) h(\bs{Y}_{2\bar{\mathcal{C}}} - \bs{R}_{22}\bs{\beta}_2) \\
& \propto h(\bs{Y}_{2\bar{\mathcal{C}}} - \bs{R}_{22}\bs{\beta}_2),
\end{align}
where $\delta(\cdot)$ indicates the Dirac delta function.
\end{enumerate}
Of course, in step 2 one could do classical inference for $\bs{\beta}_2$, or place a prior on $\bs{\beta}_2$ and perform Bayesian inference.

This procedure requires an analytic form for the conditional distribution $h$. An alternative is to assume that we can sample from this conditional
distribution, which yields a
convenient sample-based (or ``multiple imputation") RUV* algorithm.
\begin{enumerate}
\item  Use any appropriate
procedure to obtain samples  $\tilde{\bs{Y}}_{2\bar{\mathcal{C}}}^{(1)}, \ldots,
\tilde{\bs{Y}}_{2\bar{\mathcal{C}}}^{(t)}$
from a conditional distribution $p(\tilde{\bs{Y}}_{2\bar{\mathcal{C}}}|\mathcal{Y}_m)$.
\item Approximate the likelihood for $L(\bs{\beta}_2)$ by using the fact that $\hat{\bs{\beta}}_2^{(i)} :=
\bs{R}_{22}^{-1}(\bs{Y}_{2\bar{\mathcal{C}}} -
\tilde{\bs{Y}}_{2\bar{\mathcal{C}}}^{(i)})$ are sampled from a distribution proportional to $L(\bs{\beta}_2)$ [which is guaranteed
to be proper; Appendix \ref{section:proper.posterior}].
\end{enumerate}
For example, here in step 2 we can approximate the likelihood for each element of $\bs{\beta}_2$ by a normal likelihood
\begin{equation}
\label{eq:ruvb.normal.approx}
L(\beta_{2j}) \approx N(\beta_{2j}; \hat{\beta}_{2j}, \hat{s}_j^2)
\end{equation}
where $\hat{\beta}_{2j}$ and $\hat{s}_j$ are respectively the mean and
standard deviation of $\hat{\bs{\beta}}_2^{(i)}$. Alternatively, a $t$
likelihood can be used. Either approach provides an estimate and
standard error for each element of ${\bs\beta}_2$ that accounts for
uncertainty in the estimated unwanted variation. This is in contrast
to the various methods used by the other RUV approaches to estimate
the standard errors which do not account for this uncertainty (Section
\ref{section:variance.estimation}).  Here we use these values to rank
the ``significance'' of genes by the value of
$\hat{\beta}_{2j}/\hat{s}_j$.  They could also be used as inputs to
the empirical Bayes method in \citet{stephens2016false} to obtain
measurements of significance related to false discovery rates.

Other approaches to inference in Step 2 are also possible. For
example, given a specific prior on $\bs{\beta}_{2}$, Bayesian
inference for $\bs{\beta}_{2}$ could be performed by re-weighting
these samples according to this prior distribution (Appendix
\ref{section:approx.post.inf}). This re-weighting yields an
arbitrarily accurate approximation to the posterior distribution
$p(\bs{\beta}_2|\mathcal{Y}_m, \bs{Y}_{2\bar{\mathcal{C}}})$ (Appendix
\ref{section:just.like}). Posterior summaries using this re-weighting
scheme are easy to derive (Appendix \ref{section:post.summ}).

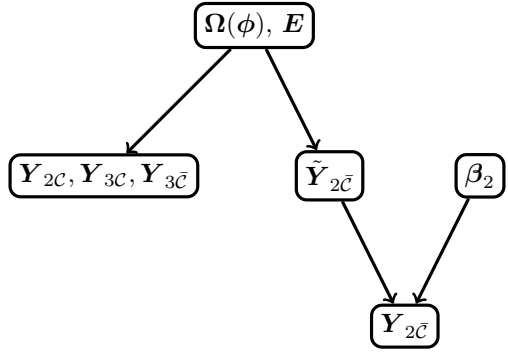
\begin{figure}
\begin{center}
\begin{tikzpicture}
  \tikzstyle{every path}=[very thick];
  \path (-2, 2) node [rounded corners, draw] (omega) {$\bs{\Omega}(\bs{\phi})$, $\bs{E}$}
  (-4, 0) node [rounded corners, draw] (ym) {$\bs{Y}_{2\mathcal{C}}, \bs{Y}_{3\mathcal{C}}, \bs{Y}_{3\bar{\mathcal{C}}}$}
  (-1, 0) node [rounded corners, draw] (y22tilde) {$\tilde{\bs{Y}}_{2\bar{\mathcal{C}}}$}
  (1, 0) node [rounded corners, draw] (beta) {$\bs{\beta}_2$}
  (0, -2) node [rounded corners, draw] (y22) {$\bs{Y}_{2\bar{\mathcal{C}}}$};
  \draw[->] (omega) -- (ym);
  \draw[->] (omega) -- (y22tilde);
  \draw[->] (y22tilde) -- (y22);
  \draw[->] (beta) -- (y22);
\end{tikzpicture}
\end{center}
\caption{Graphical model for the elements in \eqref{equation:matrix} and \eqref{equation:true.second.model}.}
\label{figure:graphical.model}
\end{figure}

\subsection{RUVB:\ Bayesian factor analysis in RUV*}

To illustrate the potential for RUV* to produce new methods for
removing unwanted variation we implemented a version of RUV*, using a
Markov chain Monte Carlo scheme to fit a simple Bayesian Factor
analysis model, and hence perform the sampling-based imputation in
Step 1 of RUV*. See Appendix \ref{section:bfa} for details.  We refer
to this method as RUVB.

\section{Estimating Standard Errors}
\label{section:variance.estimation}

For simplicity we have focused our descriptions of RUV2, RUV3, RUV4,
and RUV* on point estimation for $\bs{\beta}_{2}$. In practice, to be
useful, all of these methods must also provide standard errors for
these estimates.  Several different approaches to this problem exist,
and we have found in empirical comparisons (e.g.\ Section
\ref{section:evaluate} below) that the approach taken can greatly
affect results, particularly calibration of interval estimates. In
this section we therefore briefly review some of these approaches.

The simplest approach is to treat the estimated $\hat{\bs{Z}}$ as the
true value of ${\bs{Z}}$, and then use standard theory from linear
models to estimate the standard errors of the estimated
coefficients. That is, first estimate $\hat{\bs{Z}}$ using any of the
RUV approaches, then regress $\bs{Y}$ on $(\bs{X}, \hat{\bs{Z}})$ and
obtain estimated standard errors (for coefficients of $\bs{X}$) in the
usual way. This is the default option in the \texttt{ruv} R package. The
\texttt{cate} R package implements this (with the
\texttt{nc.var.correction} and \texttt{calibrate} parameters both set
to \texttt{FALSE}), but without the usual degrees of freedom
correction in estimated standard errors. Though asymptotically this
will not matter, we have found that for small sample sizes this can
substantially hurt performance due to downward-biased standard errors.

\citet{gagnon2013removing} noted that the standard errors estimated
using this simple approach can be poorly behaved, essentially because
the $\hat{\bs{Z}}$ are estimated and not known. They suggested
several approaches to calibrating these standard errors using control
genes. The approach that we found to work best in our comparisons (at
least, when there are many control genes --- see Section
\ref{section:sims}) is to multiply the estimated standard errors by a
factor $\lambda$ (i.e.\ set $\tilde{s}_i = \lambda \hat{s}_i$) which is
estimated from control genes by
\begin{align}
\label{eq:gb.var.inflate}
\lambda:=\left(\frac{1}{|\mathcal{C}|}\sum_{j\in \mathcal{C}} \frac{\hat{\beta}_j^2}{\hat{s}_j^2}\right)^{0.5}
\end{align}
where $\hat{\beta_j}$ and $\hat{s_j}$ are the estimated coefficients
and their standard errors (Equation (236) in
\citet{gagnon2013removing}).  In our empirical comparisons below we
refer to this procedure as ``control gene calibration''.
\citet{gagnon2013removing} use heuristic arguments to motivate
\eqref{eq:gb.var.inflate} in the context of studies with just one
covariate of interest. In Appendix \ref{section:catec}, we extend
\eqref{eq:gb.var.inflate} to the case when there is more than one
covariate of interest and formally justify it with maximum likelihood
arguments.

\citet{sun2012multiple} take a different approach to calibration,
which does not use control genes, but is motivated by the assumption
that most genes are null. Specifically they suggest centering the
$t$-statistics $\hat{\beta}_j/\hat{s}_j$ by their median and scaling
them by their median absolute deviation (MAD):
\begin{equation}
\tilde{t}_i = \frac{\hat{\beta}_i / \hat{s}_i - \median\left(\hat{\beta}_1 / \hat{s}_1,\ldots, \hat{\beta}_p / \hat{s}_p\right)}{\MAD\left(\hat{\beta}_1 / \hat{s}_1,\ldots, \hat{\beta}_p / \hat{s}_p\right)}.
\end{equation}
The motivation for this adjustment is that if most genes are null,
then normalizing by robust estimates of the null $t$-statistics'
center and scale will make the null $t$-statistics more closely match
their theoretical distribution. This adjustment of $t$ statistics is
closely connected with the variance calibration
\eqref{eq:gb.var.inflate}. Indeed, if we assume that the median of the
$t$-statistics is approximately zero, then it is effectively
equivalent to variance calibration with
\begin{equation}
\label{eq:mad.se}
\lambda_\text{MAD} := \MAD\left(\frac{\hat{\beta}_1}{\hat{s}_1},\ldots, \frac{\hat{\beta}_p}{\hat{s}_p}\right)
\end{equation}
in place of \eqref{eq:gb.var.inflate}.

In addition to MAD calibration, \citet{wang2015confounder} offer
asymptotic arguments for an additive variance adjustment. This
additive inflation term is particularly important when there are few
control genes, and is specific to the RUV4 estimator (unlike
\eqref{eq:gb.var.inflate} and \eqref{eq:mad.se} which can be applied
to any estimator).

Finally, before development of RUV-like methods, the benefits of using
empirical Bayes variance moderation (EBVM) \citep{smyth2004linear}
were widely recognized in gene expression analyses.  Variance
moderation can be applied in combination with the variance calibration
methods discussed above:\ for example, the {\tt ruv::variance\_adjust}
function in R applies EBVM before applying
\eqref{eq:gb.var.inflate}. EBVM can similarly be incorporated into
other methods. For RUV3 and CATE we can use EBVM either before
or after the generalized least squares (GLS) step of their respective
algorithms (equation \eqref{equation:ruv3.z2} for RUV3 and equation
\eqref{equation:cate.z2.est} for CATE).

\section{Empirical Evaluations}
\label{section:evaluate}
\label{section:sims}

\subsection{Simulation Approach}

We now use simulations based on real data to compare methods (focusing
on methods that use control genes, although the same simulations could
be useful more generally).  In brief, we use random subsets of real
expression data to create ``null data'' that contains real (but
unknown) ``unwanted variation'', and then modify these null data to add
(known) signal. We compare methods in their ability to reliably detect
real signals and avoid spurious signals.  Because they are based on
real data, our simulations involve realistic levels of unwanted
variation. However, they also represent a ``best-case'' scenario in
which treatment labels were randomized with respect to the factors
causing this unwanted variation. They also represent a best case
scenario in that the control genes given to each method are simulated
to be genuinely null. Even in this best-case scenario unwanted
variation is a major issue, and, as we shall see, obtaining well
calibrated inferences is challenging.

In more detail:\ we took the top 1000 expressed genes from the RNA-seq
data on muscle samples from the Genotype Tissue Expression Consortium
\citep{gtex2015}. For each dataset in our simulation study, we
randomly selected $n$ samples ($n = 6$, $10$, $20$, or $40$). We then
randomly assigned half of these samples to be in one group and the
other half to be in a second group. So our design matrix
$\bs{X} \in \mathbb{R}^{n \times 2}$ contains two columns --- a column
of ones and a column that is an indicator for group assignment.

At this point, all genes are theoretically null, as in the datasets of
our introduction (Section \ref{section:introduction}). We used this
``all-null'' scenario as one setting in our simulation studies
(similar to the simulations in \citet{rocke2015excess}). For other
settings, we added signal to a randomly selected proportion of genes
$\pi_1 = 1 - \pi_0$ ($\pi_1 = 0.1$ or $0.5$). To add signal, we first
sampled the effect sizes from a $N(0, 0.8^2)$. The standard deviation,
$0.8$, was chosen by trial and error so that the classification
problem would be neither too easy nor too hard. Let
\begin{align}
a_{j_1},\ldots,a_{j_{\pi_1 p}} \overset{iid}{\sim} N(0, 0.8^2),
\end{align}
be the effect sizes, where $j_{\ell} \in \Omega$, the set of non-null
genes. Then we drew a $\bs{W}$ matrix of the same dimension as our
RNA-seq count matrix $\bs{Z}$ by
\begin{align}
w_{ij_{\ell}}|z_{ij_{\ell}} \sim
\begin{cases}
\text{Binomial}(z_{ij_{\ell}}, 2^{a_{j_{\ell}}x_{i2}}) &\text{if } a_{j_{\ell}} < 0 \text{ and } j_{\ell} \in \Omega,\\
\text{Binomial}(z_{ij_{\ell}}, 2^{-a_{j_{\ell}}(1 - x_{i2})}) &\text{if } a_{j_{\ell}} > 0  \text{ and } j_{\ell} \in \Omega\\
\delta(z_{ij_{\ell}}) &\text{if } j_{\ell} \notin \Omega,
\end{cases}
\end{align}
where $\delta(z_{ij_{\ell}})$ indicates a point mass at
$z_{ij_{\ell}}$. We then used $\bs{W}$ as our new RNA-seq
matrix. Prior to running each method, we took a simple $\log_2$
transformation of the elements in $\bs{W}$ to obtain our $\bs{Y}$
matrix.

To motivate this approach, suppose $z_{ij} \sim \poisson(\lambda_j)$, then
\begin{align}
[w_{ij} | a_{j}, a_{j} < 0, j \in \Omega] &\sim \poisson(2^{a_{j}x_{i2}}\lambda_j)\\
[w_{ij} | a_{j}, a_{j} > 0, j \in \Omega] &\sim \poisson(2^{-a_{j}(1 - x_{i2})}\lambda_j).
\end{align}
Hence,
\begin{align}
E[\log_2(w_{ij}) - \log_2(w_{kj}) | a_{j},\ a_{j} < 0,\ j \in \Omega] &\approx a_{j}x_{i2} - a_{j}x_{k2} = a_j(x_{i2} - x_{k2}), \text{ and}\\
E[\log_2(w_{ij}) - \log_2(w_{kj}) | a_{j},\ a_{j} > 0,\ j \in \Omega] &\approx -a_{j}(1 - x_{i2}) + a_{j}(1 - x_{k2}) = a_j(x_{i2} - x_{k2}).
\end{align}
So $a_j$ is approximately the $\log_2$-fold difference between the two
groups.

\subsection{Summary of Methods Compared}

We compared standard ordinary least squares regression (OLS) against
five other approaches:\ RUV2, RUV3, RUV4, CATE (the GLS variant of
RUV4), and RUVB. We tried each of these effect-estimation methods with
different approaches to variance estimation (Section
\ref{section:variance.estimation}). Specifically, for the non-RUVB
methods we considered:
\begin{itemize}[noitemsep, nolistsep]
\item Variance moderation (EBVM) versus no moderation.
\item Variance calibration (both MAD and control-gene based) vs no calibration.
  \item For RUV3 and CATE:\ EBVM before GLS or after GLS.
\item For CATE:\ additive variance inflation vs no additive variance inflation.
\end{itemize}
Altogether this gave 6 different OLS approaches, 6 RUV2 approaches, 9
RUV3 approaches, 6 RUV4 approaches, and 15 CATE approaches. (We did
not implement CATE with additive variance inflation and EBVM before
GLS because this implementation is not straightforward given the
current \texttt{cate} software.)

For RUVB, we considered four approaches to producing mean and variance
estimates:
\begin{itemize}[noitemsep, nolistsep]
\item Using sample-based posterior summaries (Appendix
  \ref{section:post.summ}).
\item Using the normal approximation to
  the likelihood in Equation \eqref{eq:ruvb.normal.approx}.
\item Using a $t$ approximation to
  the likelihood, replacing Equation
  \eqref{eq:ruvb.normal.approx} with a $t$ density with $n - k - q$
  degrees of freedom.
\item Using a $t$ approximation as above, followed by EBVM.
\end{itemize}
Additional technical considerations are discussed in Appendix
\ref{section:additional.considerations}.

Although the large number of methods considered may seem initially
excessive, we have found that there is often more variation in
performance among different versions of a method than among the
different families of method (RUV2, RUV3, RUV4/CATE, and RUVB). That
is, choices among strategies such as EBVM and variance calibration may
matter as much (or more) in practice as choices between families such
as RUV2 vs RUV4.

\subsection{Comparisons:\ Sensitivity vs Specificity}

To compare methods in their ability to distinguish null and non-null
genes we calculated the area under the receiver operating
characteristic curve (AUC) for each method as the significance
threshold is varied. (Multiplicative variance calibration does not
affect AUC because it does not change the significance rankings among
genes; thus we do not discuss variance calibration methods in this
section.)

The clearest result here is that all the methods consistently
outperform standard OLS (Supplementary Figure
\ref{figure:diff_full}). This emphasizes the benefits of removing
unwanted variation in improving power to detect real effects. For
small sample size comparisons (e.g.\ 3 vs 3) the gains are smaller ---
though still apparent --- presumably because reliably estimating the
unwanted variation is harder for small samples.

A second clear pattern is that the use of variance moderation (EBVM)
consistently improved AUC performance:\ the best-performing method in
each family used EBVM.  As might be expected, these benefits of EBVM
are greatest for smaller sample sizes (Supplementary Figure
\ref{figure:diff_full}).

Compared with these two clear patterns, differences among the
best-performing methods in each family are more subtle.  Figure
\ref{fig:auc.medians} compares the AUC of the best method in each
family with that of RUVB, which performed best overall in this
comparison. (Results are shown for $\pi_0 = 0.5$; results for
$\pi_0 = 0.9$ are similar).  We highlight four main results:
\begin{enumerate}
\item RUVB has the best mean AUC among all methods we explored;
\item RUV4/CATE methods perform less well (relative to RUVB) when
  there are few control genes and the sample size is large;
\item In contrast, RUV2 methods perform less well (relative to RUVB)
  when the sample size is small and there are few control genes;
\item RUV3 performs somewhat stably (relative to RUVB) across the
  sample sizes.
\end{enumerate}

\begin{figure}
\begin{center}
\includegraphics{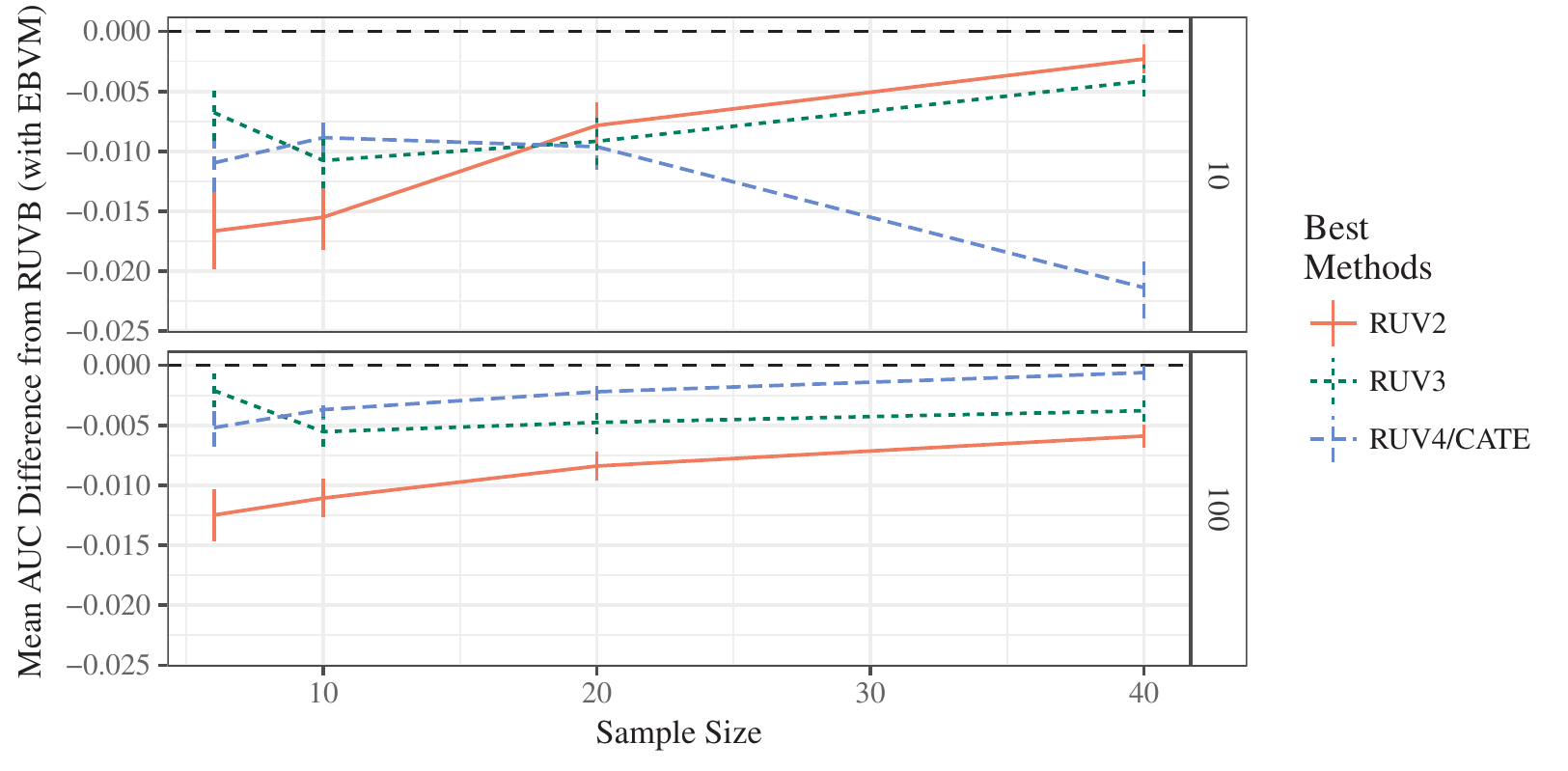}
\end{center}
\caption{Comparison of AUC achieved by best-performing method in each
  family vs RUVB. Each point shows the observed mean difference in
  AUC, with vertical lines indicating 95\% confidence intervals for
  the mean. Results are shown for $\pi_0 = 0.5$ with 10 control genes
  (upper facet) or 100 control genes (lower facet). All results are
  below zero (the dashed horizontal line), indicating superior
  performance of RUVB.}
\label{fig:auc.medians}
\end{figure}

\subsection{Comparisons:\ Calibration}

We also assessed the calibration of methods by examining the empirical
coverage of their nominal $95\%$ confidence intervals for each effect
(based on standard theory for the relevant $t$ distribution in each
case). Because the variance calibration methods can have a strong
effect (on all methods) we initially consider methods without variance
calibration.

We begin by examining ``typical'' coverage for each method in each
scenario by computing the median (across datasets) of the empirical
coverage. We found that, without variance calibration, all method
families except RUV4/CATE were able to achieve satisfactory typical
coverage --- somewhere between 0.94 and 0.97 --- across all scenarios
(Figure \ref{fig:cov.med}a) shows results for $\pi_0 = 0.5$; other
values yielded similar results, not shown).  The best performing
method (in terms of typical coverage) in the RUV4/CATE family was CATE
with only additive variance inflation. However, this method was often
overly conservative in scenarios with few control genes, especially
with larger sample sizes.

Although these median coverage results are encouraging, in practice
having small variation in coverage among datasets is also
important. That is, we would like methods to have near-95\% coverage
in most data-sets, and not only on average. Here the results (Figure
\ref{fig:cov.med}b; Supplementary Figure
\ref{fig:best.boxplots.coverage}) are less encouraging:\ the coverage
of the methods with good typical coverage (median coverage close to
95\%) varied considerably among datasets. This said, variability does
improve for larger sample sizes and more control genes, and in this
case all methods improve noticeably on OLS (Figure \ref{fig:cov.med}b, right facet).
A particular concern is
that, across all these methods, for many datasets, empirical coverage
can be much lower than the nominal goal of 95\%. Such datasets might
be expected to lead to problems with over-identification of
significant null genes (``false positives''), and under-estimation of
false discovery rates when using either frequentist or Empirical Bayes
FDR-related methodology
\citep[e.g.][]{benjamini1995controlling,storey2003positive,stephens2016false}.

To summarize variability in coverage --- as well as any tendency to be
conservative or anti-conservative --- we calculated the proportion of
datasets where the actual coverage deviated substantially from $95\%$,
which we defined as being either less than $90\%$ or more than
$97.5\%$.  Figure \ref{figure:loss.plots} shows these proportions for
each method. Here we have also included methods that use variance
calibration, as the results help highlight the effects of these
calibration methods.  The key findings are:
\begin{enumerate}
\item RUVB (the normal and sample-based versions) has ``balanced''
  errors in coverage:\ its empirical coverage is as likely to be too
  high as too low.
\item MAD calibration tends to produce highly conservative coverage ---
  that is, its coverage is very often much larger than the claimed
  95\%, and seldom much lower. This will tend to reduce false positive
  significant results, but also substantially reduce power to detect
  real effects.  The exception is that when all genes are null
  ($\pi_0=1$), MAD calibration works well for larger sample sizes.
  These results are likely explained partly by non-null genes biasing
  upwards the variance calibration parameter, an issue also noted in
  \citet{sun2012multiple}.
\item Control-gene calibration is often anti-conservative when there
  are few control genes.  However, it can work well when the sample
  size is large and there are many control genes.  Interestingly, with
  few control genes the anti-conservative behavior gets worse as
  sample size increases.
\end{enumerate}

\begin{figure}
\begin{center}
  \includegraphics{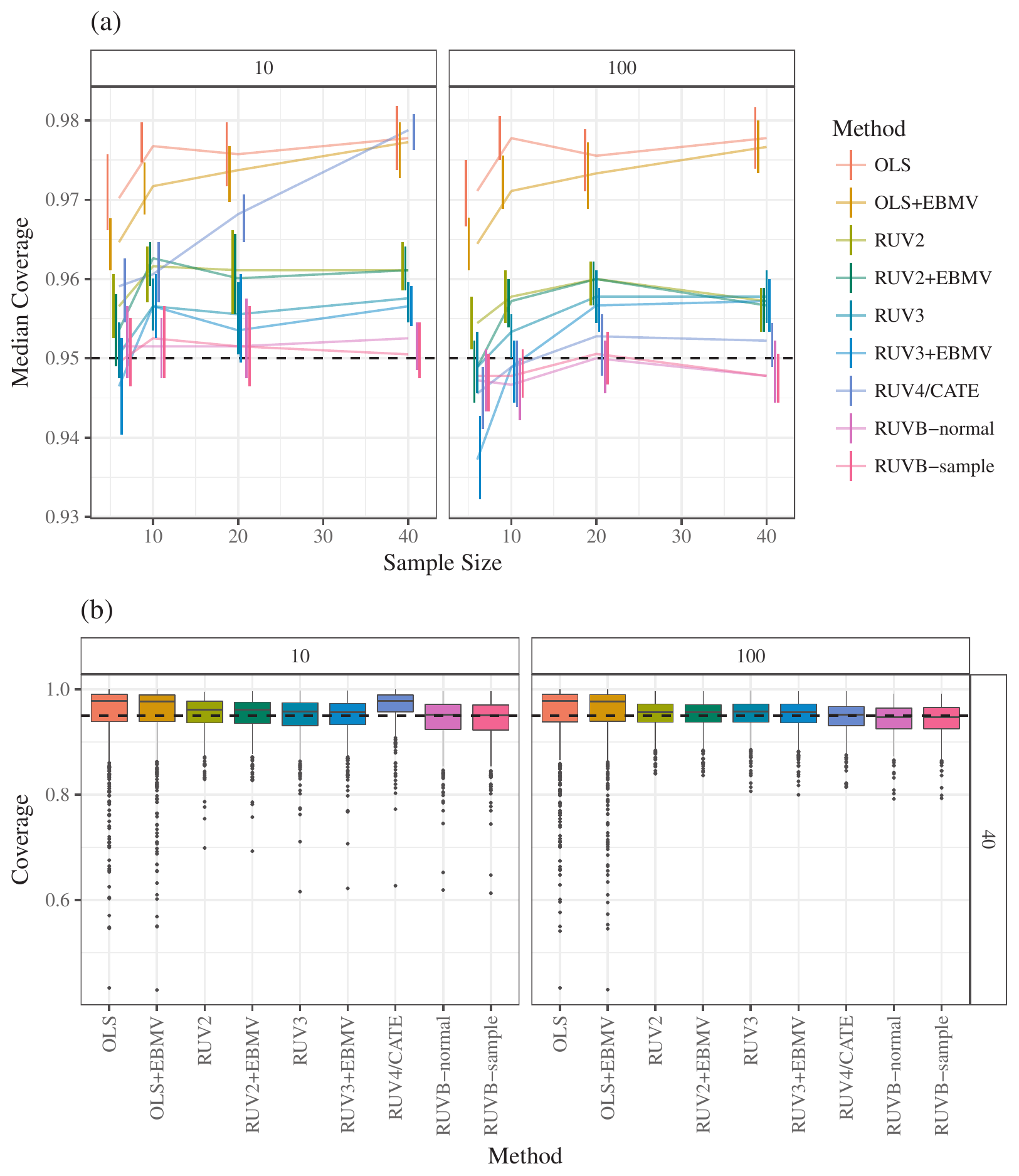}
  \caption{(a) Median coverage for the best performing methods' 95\%
    confidence intervals when $\pi_0 = 0.5$. The vertical lines are
    bootstrap 95\% confidence intervals for the median coverage,
    horizontally dodged to avoid overlap. The horizontal dashed line
    is at 0.95. (b) Boxplots of coverage for the best performing
    methods' 95\% confidence intervals when $\pi_0 = 0.5$ and
    $n = 40$. For both (a) and (b) the left and right facets show
    results for 10 and 100 control genes respectively.}
  \label{fig:cov.med}
\end{center}
\end{figure}

\begin{figure}
\begin{center}
\includegraphics{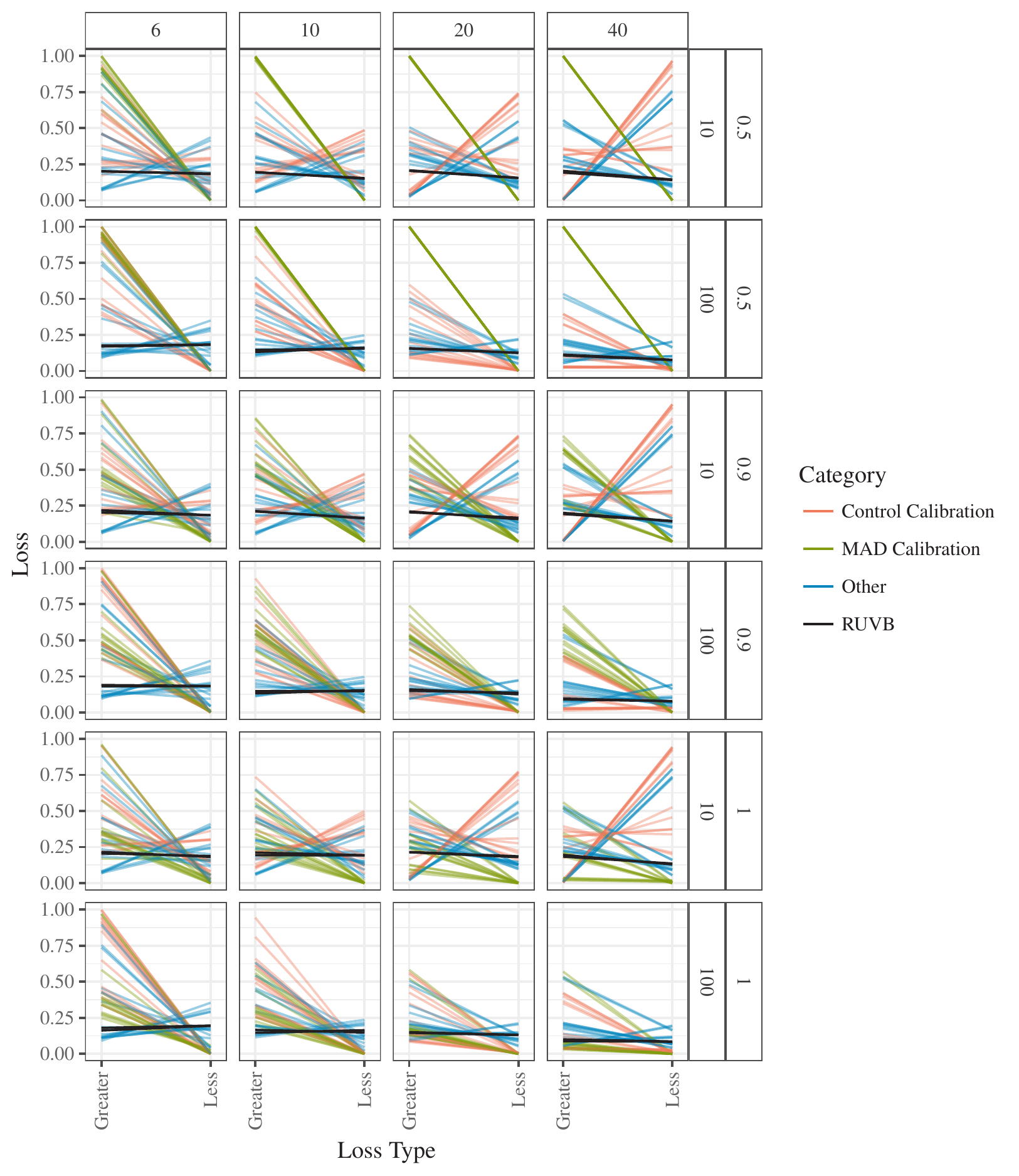}
\caption{Proportion of times the coverage for each method was either
  greater than 0.975 (Greater) or less than 0.9 (Less). The column
  facets distinguish between sample sizes while the row facets
  distinguish between the number of control genes and the proportion
  of genes that are null. The methods are color coded by variance
  calibration method:\ MAD calibrated \eqref{eq:mad.se}, control-gene
  calibrated \eqref{eq:gb.var.inflate}, non-calibrated (other), or the
  sample-based or normal-based RUVB approach.}
\label{figure:loss.plots}
\end{center}
\end{figure}

\subsection{Comparisons:\ Summary}

Our empirical comparisons show that all methods for removing unwanted
variation consistently improve on OLS in terms of ranking non-null
genes vs null genes, with RUVB overall performing best here.  In terms
of calibration, several methods --- including RUVB --- were capable of
providing good ``typical'' calibration across datasets. However,
providing consistently correct calibration remains a challenge even in
our relatively idealized scenarios.  Use of control-gene calibration
can be effective provided sample sizes are sufficiently large and
sufficient control genes are available. In practice the main challenge
here is likely to be specifying a sufficiently-large set of reliable
control genes. Use of MAD calibration can ensure conservative
behavior, but at a potential substantial loss in power to detect real
effects.

\section{Discussion}
\label{section:discussion}

In this paper we developed a framework, RUV*, that both unites and
generalizes different approaches to removing unwanted variation that
use control genes. This unifying framework, which is based on viewing
the problem as a matrix imputation problem, helps clarify connections
between existing methods. In particular we provide conditions under
which two popular methods, RUV2 and RUV4, are equivalent. The RUV*
framework also preserves one of the advantages of existing RUV
approaches --- their modularity --- which facilitates the development of
novel methods based on existing matrix imputation algorithms. At the
same time RUV* extends the RUV methods to allow for incorporating
uncertainty in estimated unwanted variation.  We provide one
illustration of this via RUVB, a version of RUV* based on Bayesian
factor analysis.  In realistic simulations based on real data we found
that RUVB is competitive with existing methods in terms of both power
and calibration, although we also highlighted the challenges of
providing consistently reliable calibration among data sets.

The methods developed in this paper are implemented in the R package
\texttt{vicar} available at
\begin{center}
\href{https://github.com/dcgerard/vicar}{https://github.com/dcgerard/vicar}.
\end{center}
This package contains functions that easily allow analysts to include
their own factor analysis code or (in the case of RUVB) their own
prior specifications into the RUV pipelines. Code and instructions for
reproducing the empirical evaluations in Section
\ref{section:evaluate} are available at
\href{https://github.com/dcgerard/ruvb\_sims}{https://github.com/dcgerard/ruvb\_sims}.

\section{Acknowledgments}
This work was supported by NIH grant HG02585 and by a grant from the
Gordon and Betty Moore Foundation (Grant GBMF \#4559). Much of the
original code for the simulated dataset generation in Section
\ref{section:sims} was based on implementations by Mengyin Lu, to whom
we are indebted. We would also like to express our sincere thanks to
the authors of the \texttt{cate} \citep{wang2015cate}, \texttt{ruv}
\citep{bartsch2015ruv}, and \texttt{sva} \citep{leek2016sva} R
packages for making their work accessible for utilization or
comparison. We also thank the Genotype-Tissue Expression Consortium
\citep{gtex2015} for making their data available for analysis.

\appendix
\section{Appendix}

\subsection{Approximate Posterior Inference}
\label{section:approx.post.inf}

As discussed in Section \ref{section:ruvb}, if we could calculate
$p(\tilde{\bs{Y}}_{2\bar{\mathcal{C}}}|\mathcal{Y}_m)$, where
$\mathcal{Y}_m := \{\bs{Y}_{2\mathcal{C}}, \bs{Y}_{3\mathcal{C}},
\bs{Y}_{3\bar{\mathcal{C}}}\}$,
then inference on
$[\bs{\beta}_{2}|\bs{Y}_{2\bar{\mathcal{C}}},\mathcal{Y}_m]$ would be
straightforward --- at least in principal if not in practice. That is,
suppose
$h(\tilde{\bs{Y}}_{2\bar{\mathcal{C}}}) :=
p(\tilde{\bs{Y}}_{2\bar{\mathcal{C}}}|\mathcal{Y}_m)$,
then one would simply use the likelihood
$h(\bs{Y}_{2\bar{\mathcal{C}}} - \bs{R}_{22}\bs{\beta}_2)$ and a
user-provided prior $g(\cdot)$ over $\bs{\beta}_2$ to calculate a
posterior and return posterior quantities.

However, to reap the benefits of modularity, we describe a procedure
to fit the overall model \eqref{equation:general.model} in two
discrete steps:\ A factor analysis step using \eqref{equation:matrix}
and a regression step using \eqref{equation:true.second.model}. We
begin with estimating the unwanted variation. Specifically, we suppose
that one first assumes the model \eqref{equation:matrix} where
$\tilde{\bs{Y}}_{2\bar{\mathcal{C}}}$ is \emph{unobserved}. The error
$\bs{E}$ can follow any model a researcher desires. Though, of course,
the rotation leading to \eqref{equation:general.model} was derived by
assuming Gaussian errors with independent rows (Section
\ref{subsection:rotate}) and the appropriateness of different error
models should be examined before use. We make the relatively weak
assumption that model \eqref{equation:matrix} is fit using any
Bayesian procedure that yields a proper posterior and that the
researcher can obtain samples from the posterior distribution
$[\tilde{\bs{Y}}_{2\bar{\mathcal{C}}}|\mathcal{Y}_m]$.  Call these
posterior draws
$\tilde{\bs{Y}}_{2\bar{\mathcal{C}}}^{(1)}, \ldots,
\tilde{\bs{Y}}_{2\bar{\mathcal{C}}}^{(t)}$.

After estimating the unwanted variation, we have a regression step
where we estimate $\bs{\beta}_2$ using
\eqref{equation:true.second.model}.  Suppose we have any user-provided
prior density over $\bs{\beta}_2$, say $g(\cdot)$. In order to reap
the benefits of modularity, we need to derive a Bayesian procedure for
approximating the posterior over $\bs{\beta}_2$ using just the samples
$\tilde{\bs{Y}}_{2\bar{\mathcal{C}}}^{(1)}, \ldots,
\tilde{\bs{Y}}_{2\bar{\mathcal{C}}}^{(t)}$
from the previous step. To do so, we let
$\hat{\bs{\beta}}_2^{(i)} :=
\bs{R}_{22}^{-1}(\bs{Y}_{2\bar{\mathcal{C}}} -
\tilde{\bs{Y}}_{2\bar{\mathcal{C}}}^{(i)})$
and note that
$\hat{\bs{\beta}}_2^{(1)},\ldots,\hat{\bs{\beta}}_2^{(t)}$ are
actually draws from
$[\bs{\beta}_2|\bs{Y}_{2\bar{\mathcal{C}}}, \mathcal{Y}_m]$ when using
an (improper) uniform prior over $\bs{\beta}_2$. This follows because
\eqref{equation:true.second.model} is a location family. We can then
weight these samples by the prior information
$g(\hat{\bs{\beta}}_2^{(i)})$ to obtain draws from the posterior
$[\bs{\beta}_2|\bs{Y}_{2\bar{\mathcal{C}}}, \mathcal{Y}_m]$ when using
$g(\cdot)$ as our prior density. This strategy of weighting samples
from one distribution to approximate quantities from another
distribution was discussed in \citet{trotter1956conditional}. What
this means in practice is that given any function of $\bs{\beta}_2$,
say $f(\cdot)$, we can approximate its posterior expectation
consistently in the number of posterior draws, $t$, from the first
step. That is,
\begin{align}
  \label{eq:post.consist}
  \frac{\sum_{i = 1}^t g(\hat{\bs{\beta}}_2^{(i)})f(\hat{\bs{\beta}}_2^{(i)})}{\sum_{i = 1}^t g(\hat{\bs{\beta}}_2^{(i)})} \overset{P}{\longrightarrow} E[f(\bs{\beta}_2)|\bs{Y}_{2\bar{\mathcal{C}}}, \mathcal{Y}_m]
\end{align}
Some example calculations of useful posterior quantities are provided
in Appendix \ref{section:post.summ}. We have given intuitive arguments
here; formal arguments are given in Appendix
\ref{section:just.like}. A technical condition required for this
approach to work is that $g(\cdot)$ be absolutely continuous with
respect to the distribution of
$[\bs{R}_{22}^{-1}(\bs{Y}_{2\bar{\mathcal{C}}} -
\tilde{\bs{Y}}_{2\bar{\mathcal{C}}})|\mathcal{Y}_m]$.
In the case when the errors $\bs{E}$ are Gaussian, it suffices to
consider priors that are absolutely continuous with respect to
Lebesgue measure.

Importantly, this two-step approach, though modular, is actually
fitting the full model \eqref{equation:general.model} as if done in
one step. That is, nothing is lost in taking this two-step approach,
except perhaps we could have found more efficient posterior
approximations if the procedure was fit in one step. However, our
approach is a contrast to RUV2, RUV3, and RUV4 which do not propagate
the uncertainty in estimating the unwanted variation. This allows us
to give more accurate quantities of uncertainty (Section
\ref{section:sims}).

To implement this approach in practice, we need to specify both a
specific model for the unwanted variation \eqref{equation:matrix} and
a prior for $\bs{\beta}_2$. As a proof of concept we use a very simple
Bayesian factor model with Gaussian errors (Appendix
\ref{section:bfa}) and an improper uniform prior on $\bs{\beta}_2$,
which yields a proper proper posterior no matter the model for the
unwanted variation (Appendix \ref{section:proper.posterior}).  We note
that although our model for the unwanted variation is based on a
factor model, RUVB is neither a version of RUV4 nor RUV2.

\subsection{Justification for Approximate Posterior Inference}
\label{section:just.like}
In this section, we prove a general result that given a location
family, we can approximate posterior expectations to any arbitrary
level of precision using samples from the error distribution. We then
connect this to the posterior approximation discussed in Appendix
\ref{section:approx.post.inf}. For data $\bs{y} \in \mathbb{R}^d$,
suppose the model is
\begin{align}
\bs{y} = h(\bs{\theta}) + \bs{e},
\end{align}
where $h: \mathbb{R}^d \rightarrow \mathbb{R}^d$ is bijective.  Let
$\bs{J}(\bs{z})$ be the Jacobian matrix of $h$. That is
\begin{align}
  [\bs{J}(\bs{z})]_{ij} = \frac{\dif h_i(\bs{z})}{\dif z_j},
\end{align}
and let $|\bs{J}(\bs{z})|$ denote its determinant. Let $g$ be the prior
of $\bs{\theta}$, which we assume is absolutely continuous with
respect to the density of $h^{-1}(\bs{y} - \bs{e})$.
\begin{theorem}
  Let $\bs{e}_1,\ldots,\bs{e}_K$ be i.i.d. random variables equal in
  distribution to $\bs{e}$. Let
  $u:\mathbb{R}^d\rightarrow\mathbb{R}^d$ be a function. Let
  \begin{align}
    \label{eq:e.hat}
    \hat{E}[u(\bs{\theta})|\bs{y}] := \frac{\sum_{k = 1}^K u(h^{-1}(\bs{y}-\bs{e}_k)) g(h^{-1}(\bs{y}-\bs{e}_k)) / |\bs{J}(h^{-1}(\bs{y} - \bs{e}_k))|}{\sum_{k = 1}^K g(h^{-1}(\bs{y}-\bs{e}_k)) / |\bs{J}(h^{-1}(\bs{y} - \bs{e}_k))|},
  \end{align}
  then
  \begin{align}
    \hat{E}[u(\bs{\theta})|\bs{y}] \overset{P}{\longrightarrow} E[u(\bs{\theta})|\bs{y}].
  \end{align}
\end{theorem}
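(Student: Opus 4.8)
The plan is to recognize \eqref{eq:e.hat} as a self-normalized importance-sampling estimator whose reference (proposal) distribution is the law of $h^{-1}(\bs{y} - \bs{e})$, and then to prove consistency by applying the weak law of large numbers to its numerator and denominator separately, followed by the continuous mapping theorem on their ratio.

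First I would pin down the reference distribution. Writing $p_{\bs{e}}$ for the density of $\bs{e}$, the likelihood is $p_{\bs{e}}(\bs{y} - h(\bs{\theta}))$, so the posterior under the prior $g$ satisfies $\pi(\bs{\theta}) \propto p_{\bs{e}}(\bs{y} - h(\bs{\theta}))\, g(\bs{\theta})$. The draws $\bs{\theta}_k := h^{-1}(\bs{y} - \bs{e}_k)$ are i.i.d.; a change of variables through the map $\bs{e} \mapsto h^{-1}(\bs{y} - \bs{e})$, whose inverse $\bs{\theta} \mapsto \bs{y} - h(\bs{\theta})$ has Jacobian of modulus $|\bs{J}(\bs{\theta})|$, shows that each $\bs{\theta}_k$ has density
\begin{align}
p_{\bs{\theta}}(\bs{\theta}) = p_{\bs{e}}(\bs{y} - h(\bs{\theta}))\,|\bs{J}(\bs{\theta})|,
\end{align}
i.e.\ the pushforward of the error distribution under this map.

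Next I would identify the importance weights. Since $\pi(\bs{\theta})/p_{\bs{\theta}}(\bs{\theta}) \propto g(\bs{\theta})/|\bs{J}(\bs{\theta})|$, the estimator \eqref{eq:e.hat} is precisely the self-normalized estimator with weights $W(\bs{\theta}) := g(\bs{\theta})/|\bs{J}(\bs{\theta})|$. Applying the weak law of large numbers to $\tfrac{1}{K}\sum_k u(\bs{\theta}_k) W(\bs{\theta}_k)$ and $\tfrac{1}{K}\sum_k W(\bs{\theta}_k)$, and cancelling the $|\bs{J}(\bs{\theta})|$ factors via the same change of variables, yields the probability limits
\begin{align}
\frac{1}{K}\sum_{k=1}^K W(\bs{\theta}_k) &\overset{P}{\longrightarrow} \int p_{\bs{e}}(\bs{y} - h(\bs{\theta}))\, g(\bs{\theta})\, d\bs{\theta} =: Z,\\
\frac{1}{K}\sum_{k=1}^K u(\bs{\theta}_k) W(\bs{\theta}_k) &\overset{P}{\longrightarrow} \int u(\bs{\theta})\, p_{\bs{e}}(\bs{y} - h(\bs{\theta}))\, g(\bs{\theta})\, d\bs{\theta}.
\end{align}
The ratio of these two limits is $E[u(\bs{\theta})|\bs{y}]$ by definition of the posterior, so since $Z > 0$ the continuous mapping theorem applied to $(a, b) \mapsto a/b$ at $b = Z$ gives $\hat{E}[u(\bs{\theta})|\bs{y}] \overset{P}{\longrightarrow} E[u(\bs{\theta})|\bs{y}]$.

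The real content, and the only place where care is genuinely needed, is the bookkeeping of the change of variables together with the integrability conditions that license the law of large numbers and guarantee $Z > 0$. The stated absolute-continuity assumption on $g$ (relative to the law of $h^{-1}(\bs{y} - \bs{e})$) ensures that $W$ is well-defined $p_{\bs{\theta}}$-almost everywhere and that the reference distribution places positive mass wherever the posterior does; propriety of the posterior gives $0 < Z < \infty$, so the denominator limit is nonzero; and posterior-integrability of $u$ makes the numerator limit finite. Once these are in place, the remainder is the textbook consistency argument for self-normalized importance sampling.
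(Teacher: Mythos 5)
Your proposal is correct and is essentially the paper's own argument: the paper likewise applies the weak law of large numbers to the numerator and denominator separately, combines them via Slutsky's theorem, and performs the change of variables $\bs{e} = \bs{y} - h(\bs{z})$ (whose Jacobian cancels the $|\bs{J}|$ factors) to identify the limit as the posterior expectation. Your framing of this as self-normalized importance sampling with proposal equal to the law of $h^{-1}(\bs{y}-\bs{e})$ is just a cleaner conceptual packaging of the same computation, with the same use of the absolute-continuity condition on $g$.
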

\begin{proof}
  Let $f$ be the density of $\bs{e}$. Then $p(\bs{y}|\bs{\theta}) =
  f(\bs{y} - h(\bs{\theta}))$ since $\bs{y}$ belongs to a location
  family. We have
  \begin{align}
     \hat{E}[u(\bs{\theta})|\bs{y}] &:= \frac{\frac{1}{K}\sum_{k = 1}^K u(h^{-1}(\bs{y}-\bs{e}_k))g(h^{-1}(\bs{y}-\bs{e}_k)) / |\bs{J}(h^{-1}(\bs{y} - \bs{e}_k))|}{\frac{1}{K}\sum_{k = 1}^K g(h^{-1}(\bs{y}-\bs{e}_k)) / |\bs{J}(h^{-1}(\bs{y} - \bs{e}_k))|}\\
    \label{equation:wlln}&\overset{P}{\longrightarrow} \frac{E\left[u(h^{-1}(\bs{y}-\bs{e})) g(h^{-1}(\bs{y} - \bs{e}))/ |\bs{J}(h^{-1}(\bs{y} - \bs{e}))|\right]}{E\left[g(h^{-1}(\bs{y} - \bs{e}))/ |\bs{J}(h^{-1}(\bs{y} - \bs{e}))|\right]}\\
    &= \frac{\int u(h^{-1}(\bs{y}-\bs{e}))g(h^{-1}(\bs{y} - \bs{e}))/ |\bs{J}(h^{-1}(\bs{y} - \bs{e}))f(\bs{e}) \dif \bs{e}}{\int g(h^{-1}(\bs{y} - \bs{e}))/ |\bs{J}(h^{-1}(\bs{y} - \bs{e}))f(\bs{e}) \dif \bs{e}}\\
    \label{equation:e.to.z}&= \frac{\int u(\bs{z})g(\bs{z})f(\bs{y} - h(\bs{z})) \dif \bs{z}}{\int g(\bs{z})f(\bs{y} - h(\bs{z})) \dif \bs{z}}\\
    \label{equation:start.post}&= \frac{\int u(\bs{z})p(\bs{z}|\bs{y}) p(\bs{y}) \dif \bs{z}}{p(\bs{y})}\\
    &= \int u(\bs{z})p(\bs{z}|\bs{y}) \dif \bs{z}\\
    &= E[u(\bs{\theta})|\bs{y}].
  \end{align}
  Line \eqref{equation:wlln} follows by two applications of the weak
  law of large numbers followed by Slutsky's theorem. Line
  \eqref{equation:e.to.z} follows by a change of variables
  $\bs{e} =\bs{y} - h(\bs{z})$, the Jacobian of which is just
  $\bs{J}(\bs{z})$. The condition on the prior $g$ is used in
  \eqref{equation:start.post} when we start considering $\bs{z}$ as a
  dummy variable for $\bs{\theta}$. To think intuitively about this
  condition on the prior, if the measure for $\bs{\theta}$ is non-zero
  on a set $\mathcal{A}$ in the parameterspace where the likelihood is
  non-zero but the measure is zero, then this approximation procedure
  would never sample $h^{-1}(\bs{y} - \bs{e}_k) \in \mathcal{A}$. This
  is even though $\mathcal{A}$ does have a non-zero posterior
  probability. The absolute continuity condition prohibits this
  behavior.
\end{proof}

We now connect this general result to Appendix
\ref{section:approx.post.inf}. The $\bs{y}$, $\bs{\theta}$, and
$\bs{e}$ in this section are the $\bs{Y}_{2\bar{\mathcal{C}}}$,
$\bs{\beta}_2$, and $\tilde{\bs{Y}}_{2\bar{\mathcal{C}}}$,
respectively, in Appendix \ref{section:approx.post.inf}. So instead of
having draws $\bs{e}_1,\ldots, \bs{e}_K$, we have that
$\tilde{\bs{Y}}_{2\bar{\mathcal{C}}}^{(1)}, \ldots,
\tilde{\bs{Y}}_{2\bar{\mathcal{C}}}^{(t)}$
are draws from $[\tilde{\bs{Y}}_{2\bar{\mathcal{C}}}|\mathcal{Y}_m]$.
We also have that $h(\bs{\beta}_2) = \bs{R}_{22}\bs{\beta}_2$, and so
the determinant of the Jacobian is merely $|\bs{R}_{22}|^p$. Since
this is a constant independent of $\bs{\beta}_2$, the Jacobians in the
numerator and denominator cancel in \eqref{eq:e.hat}.

\subsection{Simple Bayesian Factor Analysis}
\label{section:bfa}
In this section, we present a simple Bayesian factor analysis which we
used in our implementation of RUVB. The factor model is
\begin{align}
\bs{Y}_{n\times p} = \bs{L}_{n \times q}\bs{F}_{q \times p} + \bs{E}_{n \times p},\
\bs{E} \sim N_{n \times p}(\bs{0}, \bs{\Sigma}\otimes \bs{I}_n),\
\bs{\Sigma}^{-1} = \diag(\xi_1,\ldots,\xi_p).
\end{align}
We use conditionally conjugate priors on all parameters:
\begin{align}
[\bs{L} | \bs{\Psi}] &\sim N_{n \times q}(\bs{0}, \bs{\Psi} \otimes \bs{I}_n),\\
[\bs{F} | \bs{\Sigma}] &\sim N_{q \times p}(\bs{0}, \bs{\Sigma} \otimes \bs{I}_q),\\
[\xi_i|\phi] &\sim \text{Gamma}(\rho_0 / 2, \rho_0\phi / 2),\\
\phi &\sim \text{Gamma}(\alpha_0 / 2, \alpha_0\beta_0 / 2),\\
\bs{\Psi} &= \diag(\zeta_1^{-1},\ldots,\zeta_q^{-1}),\\
\zeta_i &\sim \Gamma(\eta_0 / 2, \eta_0\tau_0/2),
\end{align}
where all hyper-parameters with a $0$ subscript are assumed known. We
let $\text{Gamma}(a, b)$ denote the Gamma distribution with mean $a /
b$ and variance $a / b^2$. In the empirical evaluations of Section
\ref{section:evaluate}, we set the prior ``sample sizes'' to be small
($\rho_0 = \alpha_0 = 0.1$) so that the prior is only weakly
informative. The prior mean for the precisions ($\beta_0$) was set
arbitrarily to 1. Following \citet{ghosh2009default}, we set $\eta_0 =
\tau_0 = 1$.

The prior we use is similar in flavor to that of
\citet{ghosh2009default}, and like them we use the parameter expansion
from \citep{gelman2006prior}, which improves MCMC mixing. However,
there are some important differences between our formulation and that
of \citet{ghosh2009default}. First, we chose not to impose the usual
identifiability conditions on the $\bs{F}$ matrix as we are not
interested in the actual factors or factor loadings. Rather, our
specifications induce a prior over the joint matrix of interest
$\bs{L}\bs{F}$, which is identified. Second, the prior specification
in \citet{ghosh2009default} is not order invariant. That is, their
prior is influenced by the arbitrary ordering of the columns of
$\bs{Y}$. This is a known problem \citep{leung2016order} and we
circumvent it by specifying an order invariant prior. Third, our prior
specifications include hierarchical moderation of the variances, an
important and well-established strategy in gene-expression studies
\citep{smyth2004linear}. Finally, we chose to link the variances of
the genes with those of the factors. This is approximately modeling a
mean-variance relationship and we have found it to work well in
practice. We emphasize here that we do not actually know the form of
the unwanted variation in the simulations in Section
\ref{section:sims}, and so the good performance of RUVB there is not
due to some ``unfair advantage'' in the choice of prior.

One step of a Gibbs sampler is presented in Procedure
\ref{algorithm:bfl.gibbs}. Repeated applications of the steps in
Procedure \ref{algorithm:bfl.gibbs} will result in a Markov chain
whose stationary distribution is the posterior distribution of the
parameters in our model. As the calculations of the full conditionals
for all parameters are fairly standard, we omit the detailed
derivations.

\begin{algorithm}
  \floatname{algorithm}{Procedure}
  \caption{One step of Gibbs sampler for Bayesian factor analysis.}
  \label{algorithm:bfl.gibbs}
  \begin{algorithmic}[1]
    \STATE sample $[\bs{L} | \bs{Y}, \bs{F}, \bs{\Sigma}] \sim N_{n \times q}\left[\bs{Y}\bs{\Sigma}^{-1}\bs{F}^{T}(\bs{F}\bs{\Sigma}^{-1}\bs{F}^{\intercal} + \bs{\Psi}^{-1})^{-1},\ \bs{I}_n \otimes (\bs{F}\bs{\Sigma}^{-1}\bs{F}^{\intercal} + \bs{\Psi}^{-1})^{-1}\right]$,
    \STATE sample $[\bs{F} | \bs{Y}, \bs{L}, \bs{\Sigma}] \sim N_{q \times p}\left[(\bs{L}^{\intercal}\bs{L} + \bs{I}_q)^{-1}\bs{L}^{\intercal}\bs{Y},\ \bs{\Sigma} \otimes (\bs{L}^{\intercal}\bs{L} + \bs{I}_q)^{-1}\right]$,
    \STATE sample $[\xi_i| \bs{Y}, \bs{F}, \bs{L}, \phi] \sim \text{Gamma}\left[(n + q + \rho_0)/2, (r_i + u_i + \rho_0\phi) / 2\right]$, where\\
    $\bs{r} = \diag\left[(\bs{Y} - \bs{L}\bs{F})^{\intercal}(\bs{Y} - \bs{L}\bs{F})\right]$ and $\bs{u} = \diag\left[\bs{F}^{\intercal}\bs{F}\right]$,
    \STATE sample $[\phi|\bs{\xi}] \sim \text{Gamma}\left[(p\rho_0 + \alpha_0) / 2, (\alpha_0\beta_0 + \rho_0 \sum_{i = 1}^p\xi_i)/2\right]$,
    \STATE sample $[\zeta_i | \bs{L}] \sim \text{Gamma}\left[(n + \eta_0) / 2, (s_i + \eta_0\tau_0)/2\right], \text{ where } \bs{s} = \diag(\bs{L}^{\intercal}\bs{L}).$
  \end{algorithmic}
\end{algorithm}

\subsection{Propriety of Posterior}
\label{section:proper.posterior}

In this section, we consider the propriety of the posterior
$\bs{\beta}_2$ from Appendix \ref{section:approx.post.inf}. To prove
that the posterior of $\bs{\beta}_2$ under a uniform prior is proper,
it suffices to consider the model \eqref{equation:true.second.model},
which we repeat here:
\begin{align}
\bs{R}_{22}^{-1}\bs{Y}_{\bar{\mathcal{C}}} = \bs{\beta}_2 + \bs{R}_{22}^{-1}\tilde{\bs{Y}}_{2\bar{\mathcal{C}}}.
\end{align}
Letting $\bs{A} := \bs{R}_{22}^{-1}\bs{Y}_{\bar{\mathcal{C}}}$ and $\bs{C} = \bs{R}_{22}^{-1}\tilde{\bs{Y}}_{2\bar{\mathcal{C}}}$, we have
\begin{align}
\label{equation:location.fam}
\bs{A} = \bs{\beta}_2 + \bs{C}.
\end{align}
Equation \eqref{equation:location.fam} represents a general location
family with location parameter $\bs{\beta}_2$. It is not
difficult to prove that using a uniform prior on the location
parameter in a location family always results in a proper posterior.

\begin{theorem}
Let $\bs{C}$ be a random variable with density $f$.
Let $\bs{A} = \bs{\beta}_2 + \bs{C}$. Suppose we place a uniform prior on
$\bs{\beta}_2$. Then the posterior of $\bs{\beta}_2$ is proper.
\end{theorem}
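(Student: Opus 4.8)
The plan is to reduce propriety to a single integrability check and then dispatch that check by a translation-invariance (change of variables) argument. First I would write down the likelihood induced by the location-family structure. Since $\bs{A} = \bs{\beta}_2 + \bs{C}$ with $\bs{C} \sim f$, the conditional density of the data is
\begin{align}
p(\bs{A} \mid \bs{\beta}_2) = f(\bs{A} - \bs{\beta}_2).
\end{align}
Under the (improper) uniform prior $\pi(\bs{\beta}_2) \propto 1$, the unnormalized posterior is $p(\bs{\beta}_2 \mid \bs{A}) \propto f(\bs{A} - \bs{\beta}_2)$, so the posterior is proper precisely when its normalizing constant
\begin{align}
m(\bs{A}) := \int f(\bs{A} - \bs{\beta}_2) \, d\bs{\beta}_2
\end{align}
is finite (and positive), where the integral is taken over the space in which $\bs{\beta}_2$ lives, treating the matrix as its vectorization $\veco(\bs{\beta}_2)$ so that the integral is an ordinary Lebesgue integral over a Euclidean space.

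Next I would evaluate $m(\bs{A})$ by the substitution $\bs{u} = \bs{A} - \bs{\beta}_2$. This is an affine bijection whose Jacobian has determinant of absolute value one (it is a translation composed with a sign flip), so $d\bs{\beta}_2 = d\bs{u}$ and
\begin{align}
m(\bs{A}) = \int f(\bs{u}) \, d\bs{u} = 1,
\end{align}
the last equality holding because $f$ is a probability density. Hence $m(\bs{A}) = 1 < \infty$ for every fixed $\bs{A}$, and the posterior is proper; indeed it is already normalized, being exactly the density $\bs{\beta}_2 \mapsto f(\bs{A} - \bs{\beta}_2)$.

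There is essentially no substantive obstacle here: the only thing that genuinely matters is that $f$ be a bona fide density (so that its total mass is one) and that the translation $\bs{\beta}_2 \mapsto \bs{A} - \bs{\beta}_2$ be measure-preserving, both of which are immediate. The one piece of bookkeeping worth stating explicitly is the vectorization step, so that ``uniform prior on $\bs{\beta}_2$'' is unambiguously Lebesgue measure on the appropriate Euclidean space and the change-of-variables formula applies verbatim; everything else follows from Fubini/Tonelli applied to the nonnegative integrand, which requires no additional hypotheses.
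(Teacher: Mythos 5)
Your proposal is correct and follows essentially the same route as the paper's own proof: identify the likelihood $f(\bs{A} - \bs{\beta}_2)$ from the location-family structure, note the posterior is proportional to it under the uniform prior, and apply the change of variables $\bs{\eta} = \bs{A} - \bs{\beta}_2$ to conclude the normalizing integral equals one. The extra remarks on vectorization and the unit Jacobian are just explicit bookkeeping for what the paper leaves implicit.
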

\begin{proof}
We first note that
\begin{align}
\int f(\bs{C}) \dif \bs{C} = 1.
\end{align}
The density of $\bs{A}$ is just $f(\bs{A} - \bs{\beta}_2)$. Since the prior of $\bs{\beta}_2$ is uniform, this means that the posterior of $\bs{\beta}_2$ given $\bs{A}$ is proportional to the likelihood $f(\bs{A} - \bs{\beta}_2)$. Hence, making a change of variables of $\bs{\eta} = \bs{A} - \bs{\beta}_2$, we have
\begin{align}
\int f(\bs{A} - \bs{\beta}_2) \dif \bs{\beta}_2 = \int f(\bs{\eta}) \dif \bs{\eta} = 1.
\end{align}
\end{proof}

\subsection{Posterior Summaries}
\label{section:post.summ}
Using \eqref{eq:post.consist}, one can obtain approximations to
posterior summaries quite easily. Let $\hat{\bs{\beta}}_2^{(i)} :=
\bs{R}_{22}^{-1}(\bs{Y}_{2\bar{\mathcal{C}}} -
\tilde{\bs{Y}}_{2\bar{\mathcal{C}}}^{(i)})$. The posterior mean may be
approximated by
\begin{align}
  E[\bs{\beta}_2 | \bs{Y}_{2\bar{\mathcal{C}}}, \mathcal{Y}_m]
  \approx \frac{\sum_{i = 1}^t \hat{\bs{\beta}}_2^{(i)}
    g\left(\hat{\bs{\beta}}_2^{(i)} \right)}{\sum_{i =
      1}^tg\left(\hat{\bs{\beta}}_2^{(i)}\right)}.
\end{align}
Suppose one desires a $(1 - \alpha)$ credible interval for
$\beta_{2jk}$ for some $0 < \alpha < 1$. Sort the
$\hat{\beta}_{2jk}^{(i)}$'s such that
$\hat{\beta}_{2jk}^{(1)} < \hat{\beta}_{2jk}^{(2)} < \cdots <
\hat{\beta}_{2jk}^{(t)}$.
A $(1 - \alpha)$ credible interval may be approximated by finding the
$\ell, m \in \{1,\ldots, t\}$ such that
\begin{align}
\label{equation:get.bounds}
\frac{\sum_{i = 1}^{\ell - 1} g\left(\hat{\bs{\beta}}_2^{(i)} \right)}{\sum_{i =
      1}^tg\left(\hat{\bs{\beta}}_2^{(i)}\right)} \leq \alpha / 2 \text{ and }
  \frac{\sum_{i = m + 1}^t g\left(\hat{\bs{\beta}}_2^{(i)} \right)}{\sum_{i =
      1}^tg\left(\hat{\bs{\beta}}_2^{(i)}\right)} \leq \alpha / 2,
\end{align}
then setting the $(1 - \alpha)$ interval as
$(\hat{\beta}_{2jk}^{(\ell)}, \hat{\beta}_{2jk}^{(m)})$. Note that in
\eqref{equation:get.bounds} we have assumed that the
$\hat{\bs{\beta}}_2^{(i)}$'s all have completely distinct elements. If
the error distribution is Gaussian then we may do this without loss of
generality as $\tilde{\bs{Y}}_{2\bar{\mathcal{C}}}^{(i)}$ is drawn
from some convolution with a normal, and so is absolutely continuous
with respect to Lebesgue measure.

Local false sign rates (lfsr's) \citep{stephens2016false} have
recently been proposed as a way to measure the confidence in the sign
of each effect. The intuition is that the lfsr is the probability of
making an error if one makes their best guess about the sign of a
parameter. Let
\begin{align}
p_{jk} := \frac{\sum_{\{i: \hat{\beta}_{2jk}^{(i)} < 0\}} g(\hat{\beta}_{2jk}^{(i)})}{\sum_{i = 1}^t g(\hat{\beta}_{2jk}^{(i)})}.
\end{align}
Then the lfsr's may be approximated by
\begin{align}
  \label{equation:lfsr.def}
  \text{\emph{lfsr}}_{jk} := \min(p_{jk}, 1-p_{jk}),
\end{align}
which simplifies under a uniform prior to
\begin{align}
  \text{\emph{lfsr}}_{jk} := \frac{1}{t}\min\left[ \#\{\hat{\beta}_{2jk}^{(i)} < 0\},
    \#\{\hat{\beta}_{2jk}^{(i)} > 0\}\right].
\end{align}
Though, in many cases in Section \ref{section:evaluate}, the Markov
chain during the Bayesian factor analysis did not sample
$\beta_{2jk}$'s of opposite sign. The estimate for the lfsr using
\eqref{equation:lfsr.def} would then be 0. As it is often desirable to
obtain a ranking of the most significant genes, this is an unappealing
feature. We instead use a normal approximation to estimate the the
lfsr's, using the posterior means and standard deviations from the
samples of the $\beta_{2jk}$'s.

\subsection{Additional Simulation Considerations}
\label{section:additional.considerations}

To implement the methods in the simulation studies in Section
\ref{section:sims}, there are additional technicalities to be
considered. Here, we briefly list out our choices.

The number of factors, $q$, is required to be known for all methods
that we examined. There are many approaches to estimate this in the
literature \citep[for a review, see][]{owen2016bi}. We chose to use
the parallel analysis approach of \citep{buja1992remarks} as
implemented in the \texttt{num.sv} function in the R package
\texttt{sva} \citep{leek2016sva}. An alternative choice could have
been the bi-cross-validation approach described in \citep{owen2016bi}
and implemented in the \texttt{cate} R package \citep{wang2015cate}.

RUV2, RUV3, RUV4, and CATE all require specifying a factor analysis
(Definition \ref{def:fa}) for their respective first steps. For all
methods, we used the truncated SVD
\eqref{equation:tsvd.z}-\eqref{equation:tsvd.sigma}. This was to make
the methods more comparable. Though we note that
\citet{wang2015confounder} also suggest using the quasi-maximum
likelihood approach of \citet{bai2012statistical}.

For RUVB, we ran each Gibbs sampler (Algorithm
\ref{algorithm:bfl.gibbs}) for 12,500 iterations, dropping the first
2,500 iterations as burn-in. We kept every 10th sample to retain 1000
posterior draws from which we calculated the posterior summaries of
Appendix \ref{section:post.summ}. Convergence diagnostics and other
checks were implemented on a sample of the Markov chains in our
simulation studies. We detected no problems with convergence (data not
shown).

Since we explored many combinations of methods, we adopt the following notation:
\begin{itemize}[noitemsep, nolistsep]
\item o = original variance estimates,
\item m = MAD variance calibration,
\item c = control-gene variance calibration,
\item l = limma-moderated variances (EBVM),
\item lb = limma-moderated variances (EBVM) before GLS (for either CATE or RUV3),
\item la = limma-moderated variances (EBVM) after GLS (for either CATE or RUV3),
\item d = delta-adjustment from CATE package (additive variance inflation),
\item n = $t$ approximation for the likelihood in RUVB,
\item nn = normal approximation for the likelihood in RUVB.
\end{itemize}
When comparing the AUC of different methods, certain combinations of
methods theoretically have the same AUC. Specifically, applying MAD
variance calibration (m) or control-gene variance calibration (c) does
not alter the AUC of a method. Thus, we only need to compare one
method of each of these groups to obtain comprehensive results on AUC
performance. The members of these groups are those shown in
Supplementary Figure \ref{figure:diff_full}.

The effect of library size (the total number of gene-reads in a
sample) is a well-known source of bias in RNA-seq data
\citep{dillies2013comprehensive}. We do not explicitly adjust for
library size. In this paragraph, we briefly argue that RUV-like
methods can effectively account for library size. The usual pipeline
to adjust for library size is to choose a constant for each sample,
$c_i$, and divide each count in a sample by $c_i$. Many proposals have
been made to estimate $c_i$ \citep{anders2010differential,
  bullard2010evaluation, robinson2010scaling}. There are also
variations on this pipeline. For example, others choose a constant for
each sample, $c_i$, and include the
$\bs{c} = (c_1,\ldots,c_n)^{\intercal}$ vector as a covariate in the
regression model \citep{langmead2010cloud}. In terms of our
$\log_2$-count matrix of responses $\bs{Y}$, this corresponds to
fitting the model
\begin{align}
\label{equation:library.size}
\bs{Y} = \bs{X}\bs{\beta} +
\bs{c}\bs{d}^{\intercal} + \bs{E},
\end{align}
where $\bs{c}$ is estimated independently of $\bs{X}$
in some (ad-hoc) fashion and $\bs{d}$ may or may not be
assumed to be the vector of ones, $\bs{1}_p$. However,
equation \eqref{equation:library.size} is just a factor-augmented
regression model, the same as \eqref{equation:full.model}. Methods
that assume model \eqref{equation:full.model} thus need not adjust for
library size and need not choose one of the many procedures to
estimate $\bs{c}$. That is,
library size can be treated as just another source of unwanted variation.

\subsection{Calibrated CATE}
\label{section:catec}

We can derive a multivariate version of \eqref{eq:gb.var.inflate} and
formally justify it with maximum likelihood arguments via a
modification of step 2 of Procedure \ref{algorithm:ruv4}. After step
1, we modify \eqref{equation:control.model} and
\eqref{equation:control.model.var} to include a variance inflation
parameter, $\lambda$.
\begin{align}
  \bs{Y}_{2\mathcal{C}} &= \bs{Z}_2\hat{\bs{\alpha}}_{\mathcal{C}} +
  \bs{E}_{2\mathcal{C}}, \label{equation:control.model.mod}\\
  e_{2\mathcal{C}ij} &\overset{ind}{\sim} N(0,
  \lambda\hat{\sigma}_j^2). \label{equation:control.model.var.mod}
\end{align}
Step 2 of CATE simply calculates the MLE of $\bs{Z}_2$ in
\eqref{equation:control.model} and
\eqref{equation:control.model.var}. Our calibration is simply finding
the MLE's of $\bs{Z}_2$ and $\lambda$ in
\eqref{equation:control.model.mod} and
\eqref{equation:control.model.var.mod}, which can be done in closed
form. The estimate of $\bs{Z}_2$ is unchanged
\eqref{equation:cate.z2.est}. The MLE of $\lambda$ is
\begin{align}
\label{equation:lambda.mle}
\hat{\lambda} = \frac{1}{k_2m}\tr\left[(\bs{Y}_{2\mathcal{C}} - \hat{\bs{Z}}_2\hat{\bs{\alpha}}_{\mathcal{C}})\hat{\bs{\Sigma}}_{\mathcal{C}}^{-1}(\bs{Y}_{2\mathcal{C}} - \hat{\bs{Z}}_2\hat{\bs{\alpha}}_{\mathcal{C}})^{\intercal}\right].
\end{align}
Inference then proceeds by using $\hat{\lambda}\hat{\sigma}_j$ as the
estimate of $\sigma_j$. Formula \eqref{equation:lambda.mle} is a
multivariate version of equation (236) of \citet{gagnon2013removing}
(and \eqref{eq:gb.var.inflate}). Though while we derived
\eqref{equation:lambda.mle} using an application of maximum
likelihood, \citet{gagnon2013removing} formulated their calibration
using heuristic arguments.

\bibliography{vicar_bib}

\clearpage

\section*{}
\label{section:supp}

\huge
\vspace{4in}
\begin{center}
Supplementary Materials
\end{center}
\normalsize

\begin{suppfigure}
\begin{center}
\includegraphics{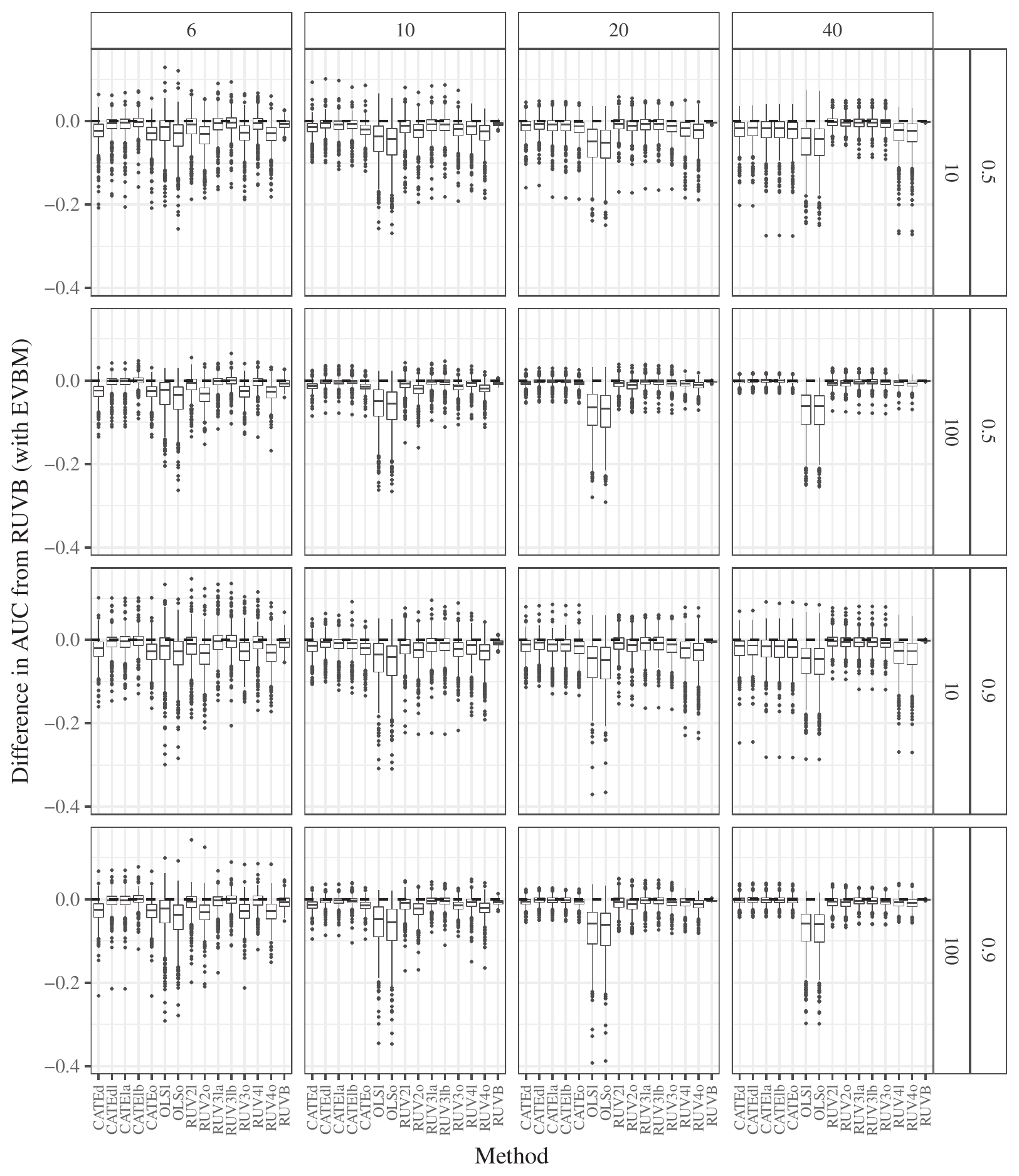}
\caption{Boxplots of AUC of methods subtracted from the AUC of RUVB
  (with EBVM). Anything above zero (the dashed horizontal line)
  indicates superior performance to RUVB. Anything below the line
  indicates inferior performance to RUVB. Columns are the sample
  sizes, rows are the number of control genes by the proportion of
  genes that are null. For the notation of the methods, see Appendix
  \ref{section:additional.considerations}.}
\label{figure:diff_full}
\end{center}
\end{suppfigure}

\begin{suppfigure}
\begin{center}
\includegraphics{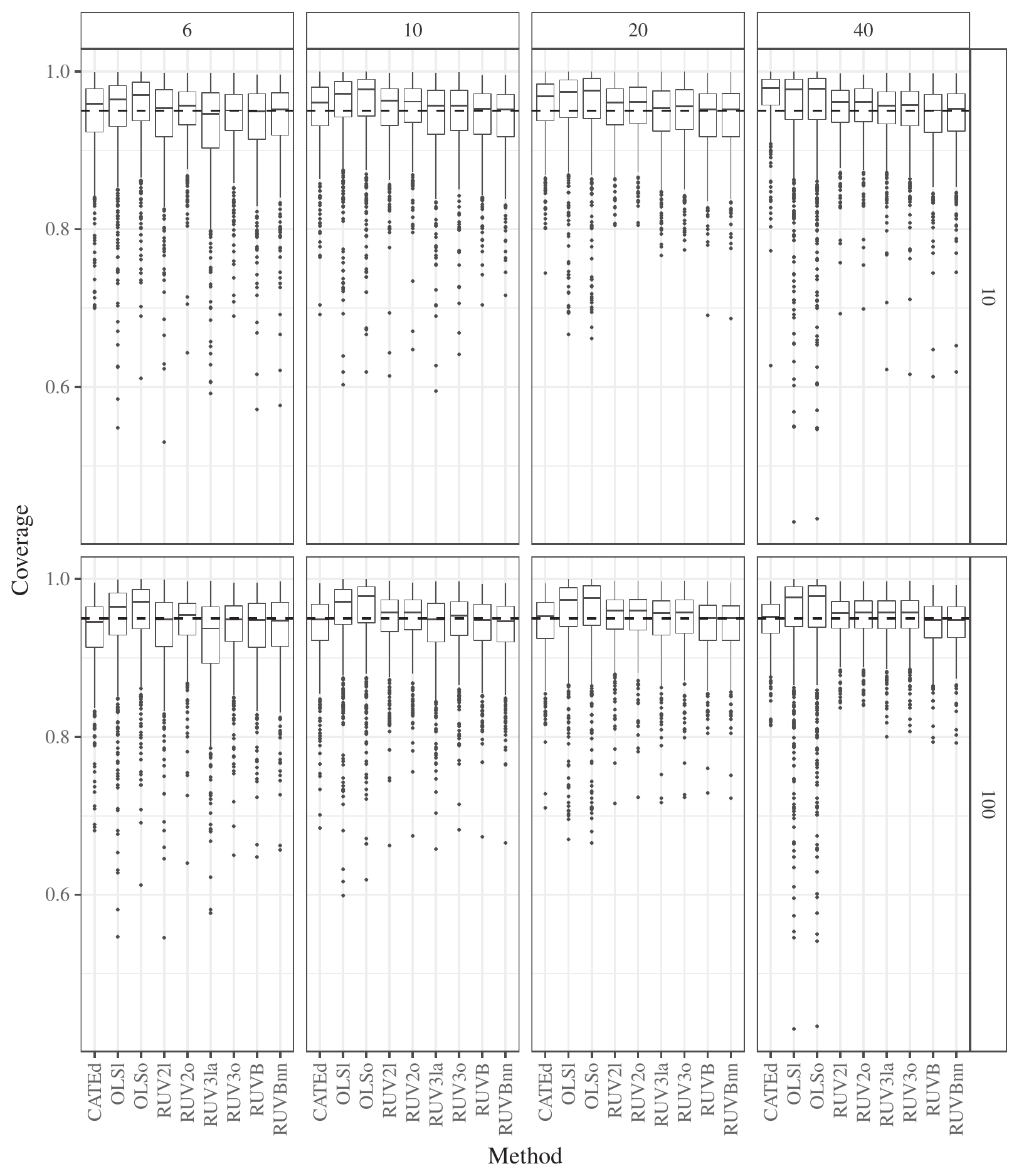}
\caption{Boxplots for the coverage of the 95\% confidence intervals
  for the best-performing methods when $\pi_0 = 0.5$. The column
  facets index the sample sizes used while the horizontal facets index
  the number of control genes used. The horizontal dashed line is at
  0.95. For the notation of the methods, see Appendix
  \ref{section:additional.considerations}.}
\label{fig:best.boxplots.coverage}
\end{center}
\end{suppfigure}

\end{document}